\newcommand{\remove}[1]{}
\newtheorem{thm}{Theorem}[section]
\newtheorem{claim}[thm]{Claim}
\newtheorem{lem}[thm]{Lemma}
\newtheorem{define}[thm]{Definition}
\newtheorem{cor}[thm]{Corollary}
\newtheorem{THM}{Theorem}
\def\F{{\mathbb{F}}}
\def\bF{{\overline{\mathbb{F}}}}
\def\C{{\mathbb{C}}}
\def\V{{\mathbf{V}}}
\def\E{{\mathbb E}}
\newcommand{\ip}[2]{\langle #1,#2 \rangle}
\def\_{\,\,\,\,\,}
\def\poly{\textsf{poly}}
\def\Tr{\mathrm{Tr}}
\def\ep{\mathrm{e}_p}
\newcommand{\eps}{\epsilon}
\begin{document}

\title{Subspace Evasive Sets}
\author{
Zeev Dvir\thanks{Department of Computer Science, Princeton University, Princeton NJ.
Email: \texttt{zeev.dvir@gmail.com}. Research partially
supported by NSF grant CCF-0832797 and by the Packard fellowship.}
\and
Shachar Lovett
\thanks{School of Mathematics, Institute for Advanced Study, Princeton, NJ. Email:
\texttt{slovett@math.ias.edu}. Research supported by NSF grant DMS-0835373.}\\
}
\date{}
\maketitle

\begin{abstract}
 In this work we describe an explicit, simple, construction of large subsets of $\F^n$, where $\F$ is a finite field, that have small intersection with every $k$-dimensional affine subspace. Interest in the explicit construction of such sets, termed  {\em subspace-evasive} sets, started in the work of Pudl\'{a}k and R\"{o}dl \cite{PudlakRodl04} who showed how such constructions over the binary field can be used to construct explicit Ramsey graphs. More  recently, Guruswami \cite{Guruswami11}  showed that, over large finite fields (of size polynomial in $n$), subspace evasive sets can be used to obtain explicit list-decodable codes with optimal rate and constant list-size. In this work we construct subspace evasive sets over large fields and use them, as described in \cite{Guruswami11}, to reduce the list size of folded Reed-Solomon codes form $\poly(n)$ to a constant.
\end{abstract}


\section{Introduction}

\subsection{Subspace evasive sets} 
\label{sub:subspace_evasive_sets}


Defined formally, a $(k,c)$-subspace evasive set $S \subset \F^n$ has intersection of size at most $c$ with every $k$-dimensional affine subspace $H \subset \F^n$. This definition makes sense over finite fields, as well as over infinite fields. Over finite fields, a simple probabilistic argument shows that a random set $S$ of size $|\F|^{(1-\eps)n}$ will have intersection of size at most $c(k,\eps) = O(k/\eps)$ with any $k$-dimensional affine subspace $H$. In this work we give the first explicit construction of a subspace-evasive set $S$ of size $|\F|^{(1-\eps)n}$ that has intersection size at most $c(k,\eps) = (k/\eps)^{k}$ with every $k$-dimensional affine subspace $H$. This is stated in the next theorem.
We postpone the exact definition of the term {\em explicit} to the following sections (see Theorem~\ref{thm-evasive} for the formal statement of this theorem and Section~\ref{sec-explicit} for a discussion of explicitness).

\begin{THM}[Main theorem]\label{THM-main}
For any finite field $\F$ and parameters $k \ge 1,\eps>0$ there exists an explicit construction of a set $S \subset \F^n$ of size $|S| > |\F|^{(1-\eps)n}$ that is $(k,c(k,\eps))$-subspace evasive with $c(k,\eps) = (k/\eps)^k$.
\end{THM}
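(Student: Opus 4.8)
I would realize $S$ as a block-structured algebraic set and reduce the evasiveness statement to showing that the intersection of the defining hypersurfaces with the affine subspace is zero-dimensional. Set $d:=\lceil k/\eps\rceil$ and $t:=d+1$, and (padding coordinates if necessary) assume $t\mid n$. Partition $[n]$ into $n/t$ blocks $B_1,\dots,B_{n/t}$ of size $t$, write $B_j=\{c_{j,1},\dots,c_{j,t}\}$, and in block $j$ impose the single equation
\[
 g_j(x)\ :=\ x_{c_{j,t}}-\sum_{i=1}^{t-1}a_{j,i}\,x_{c_{j,i}}^{\,i}\ =\ 0,
\]
for nonzero field elements $a_{j,i}$ to be fixed later; let $S:=\{x\in\F^n:g_j(x)=0\text{ for all }j\}$. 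Solving $g_j=0$ for $x_{c_{j,t}}$ gives $|S|=(|\F|^{\,t-1})^{n/t}=|\F|^{(1-1/t)n}>|\F|^{(1-\eps)n}$, so the size bound is automatic; the role of the distinct exponents $1,2,\dots,t-1$ is to keep the top-order behaviour of $g_j$ rigid under affine substitutions, which is what the rest of the argument needs.

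\textbf{Reduction to a B\'ezout bound.} Fix a $k$-dimensional affine subspace $H=\{\Phi(y):y\in\F^k\}$, where $\Phi=(\Phi_1,\dots,\Phi_n):\F^k\to\F^n$ is an affine embedding, so $1,\Phi_1,\dots,\Phi_n$ span a $(k+1)$-dimensional space of affine-linear functions on $\F^k$. Substituting $x=\Phi(y)$ turns $g_j$ into $h_j:=g_j\circ\Phi\in\bF[y_1,\dots,y_k]$ with $\deg h_j\le d$, and $S\cap H$ is precisely the set of $\F$-rational points of $\tilde W:=\bigcap_{j}V(h_j)\subseteq\bF^k$; it thus suffices to prove $|\tilde W|\le d^k$. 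I would split this as: (a) $\tilde W$ is zero-dimensional; and (b) granting (a), a standard prime-avoidance argument yields $k$ generic $\bF$-linear combinations $\ell_1,\dots,\ell_k$ of $h_1,\dots,h_{n/t}$ with $V(\ell_1,\dots,\ell_k)$ zero-dimensional, $\tilde W\subseteq V(\ell_1,\dots,\ell_k)$, and $|V(\ell_1,\dots,\ell_k)|\le\prod_i\deg\ell_i\le d^k$ by the affine B\'ezout bound, so $|\tilde W|\le d^k=(k/\eps)^k$. Step (b) is routine; the whole difficulty is (a).

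\textbf{The crux: zero-dimensionality.} Statement (a) is equivalent to: $S$ contains no irreducible curve $\Gamma$ lying inside a $k$-dimensional affine subspace. Suppose such a $\Gamma\subseteq H$ existed. On the one hand, lying in the $k$-flat $H$ imposes $n-k$ independent affine relations on the coordinate functions $x_1|_\Gamma,\dots,x_n|_\Gamma$; on the other, the block equations $x_{c_{j,t}}|_\Gamma=\sum_i a_{j,i}\,(x_{c_{j,i}}|_\Gamma)^{\,i}$ all hold on $\Gamma$. Passing to the smooth model of $\Gamma$ and to a point $P$ at which some coordinate has a pole, I would compare the valuations $v_P$ block by block: since the summands $x_{c_{j,i}}^{\,i}$ carry distinct exponents, the pole order of the right-hand side at $P$ is typically governed by a single summand, which then cannot match the degree-one left-hand side unless the pole orders of the various $x_\ell|_\Gamma$ ``conspire''. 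Ruling out this conspiracy, for every $\Gamma$ at once, is exactly the point where the explicit choice of the $a_{j,i}$ enters (through the non-vanishing of a Vandermonde-type determinant), and is the step I expect to be the main obstacle. The extra leverage available is that there are $n/t$ blocks while only a bounded (in $k,\eps$) number of them can be ``degenerate'' with respect to a fixed $\Gamma$, so once $n$ exceeds a threshold depending only on $k$ and $\eps$ --- which the notion of explicitness underlying Theorem~\ref{thm-evasive} allows --- no such $\Gamma$ can exist.

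\textbf{Explicitness.} The sole non-constructive ingredient is a choice of the $a_{j,i}$ making the finitely many non-vanishing conditions of the previous step hold simultaneously for all $H$; as each is the non-vanishing of a fixed polynomial in the $a_{j,i}$, one explicit assignment suffices --- for instance $a_{j,i}=\alpha^{(j-1)(t-1)+i}$ for a generator $\alpha$ of the multiplicative group of $\F$ (or of a small extension), after which one checks the relevant determinants are nonzero. Completing this verification, together with the valuation estimate above, is the remaining and most laborious part of the proof.
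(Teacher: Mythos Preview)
Your construction, as written, is not subspace evasive: because the lowest exponent in each $g_j$ is $1$, the set $S$ contains affine subspaces of dimension $n/t$. Concretely, let $H_0\subset\F^n$ be the affine subspace on which $x_{c_{j,i}}=0$ for every block $j$ and every $2\le i\le t-1$, together with the linear relation $x_{c_{j,t}}=a_{j,1}\,x_{c_{j,1}}$ in each block. On $H_0$ the defining polynomial collapses to $g_j=x_{c_{j,t}}-a_{j,1}x_{c_{j,1}}=0$, so $H_0\subseteq S$; and $\dim H_0=n/t$, which exceeds $k$ as soon as $n\ge k(\lceil k/\eps\rceil+1)$. No choice of the nonzero scalars $a_{j,i}$ can repair this.

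Even after the obvious fix (shift to exponents $\ge 2$), your plan and the paper's diverge at the essential point. You impose \emph{one} hypersurface per block and hope that the $n/t$ restricted hypersurfaces $\{h_j=0\}\subset\bF^k$ share no common curve, arguing via a valuation heuristic on a hypothetical $\Gamma$. You correctly flag this as the crux, but the heuristic is not a proof: for $k\ge 2$ a single $h_j$ never cuts $\bF^k$ to dimension $0$, and nothing in your setup forces different $h_j$'s to be transverse to one another---your Vandermonde-type condition on the $a_{j,i}$ lives block by block and does not couple $h_j$ to $h_{j'}$. The claim that only boundedly many blocks can be ``degenerate'' for a fixed $\Gamma$ is unsupported.

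The paper avoids this by using $k$ equations \emph{per block}: in each block of $m=k/\eps$ variables it takes $f_i(x)=\sum_{j=1}^m A_{i,j}x_j^{d_j}$ with $A$ strongly regular and $d_1>\cdots>d_m$. The point is that $\V(f_1,\ldots,f_k)$ already meets every $r$-flat ($r\le k$) in at most $(d_1)^r$ points over $\bF$, block by block. This is proved by putting the parametrisation of the flat in echelon form, using invertibility of the relevant $k\times k$ minor of $A$ to change basis in the $f_i$ so that among the ``pivot'' monomials only $x_{j_i}^{d_{j_i}}$ survives in the $i$-th equation, and then substituting $t_i\mapsto t_i^{D/d_{j_i}}$ to convert the restricted system into one of the shape $t_i^{D}+r_i(t)=0$ with $\deg r_i<D$; such a system has at most $D^k$ solutions by a direct quotient-ring argument, and Bezout then yields the bound $(d_1)^k$. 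The Cartesian product inherits the bound by an easy induction over blocks. In short, the missing idea in your proposal is to use $k$ polynomials tied together by a strongly regular coefficient matrix, so that zero-dimensionality is secured \emph{within a single block} rather than relying on an unproved global transversality across blocks.
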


While  being far from the optimal bound of $O(k/\eps)$ and despite being exponential in $k$, the bound we obtain is  useful when $k$ is small and the field is sufficiently large. As we will see below, this is precisely the setting that was raised by Guruswami in connection to error correcting codes.

The main ingredient in our construction is an explicit family of degree $d$ polynomials $f_1,\ldots,f_k \in \F[x_1,\ldots,x_n]$, for all $k \leq n \leq |\F|$, such that for every injective (i.e full rank) affine map $\ell : \F^k \mapsto \F^n$ the system of equations
\begin{eqnarray*}
&f_1(\ell(t_1,\ldots,t_k))=0 \\
&\vdots \\
&f_k(\ell(t_1,\ldots,t_k))=0.
\end{eqnarray*}
has at most $d^k$ solutions. The degree  $d$ can be any number  between $n$ and $|\F|$. Using algebraic-geometry terminology, the set of common zeros of $f_1,\ldots,f_k$ forms an $(n-k)$-dimensional variety which has finite intersection with any $k$ dimensional affine sub-space. We call such varieties {\em everywhere-finite} varieties (see Section~\ref{sec-equations} for a longer discussion of this particular choice of name).

Constructing subspace evasive sets as in Theorem~\ref{THM-main} is then obtained by partitioning the $n$ coordinates of the space into blocks of size $k/\eps$ and applying the basic construction (of an everywhere-finite variety) on each block independently. The polynomials we use in the basic construction are extremely simple (weighted sums of powers of variables) which makes the final construction explicit enough to be useful for the list-decoding application described in \cite{Guruswami11} (allowing for both efficient encoding {\em and} list-decoding). Our proofs are elementary and do not use any sophisticated algebraic machinery (apart from Bezout's theorem).\footnote{We do use Weil's exponential sum estimates to analyze a certain variant of the construction but this part of the proof can be omitted by choosing the polynomials $f_1,\ldots,f_k$ more carefully (as described in Section~\ref{sec-explicit}).}

\subsection{List-decodable codes} 
\label{sub:list_decodable_codes}
An error-correcting code allows one to encode a message into a codeword so that encodings of different messages differ in many coordinates. This allows one to  recover the original message from an encoding that is corrupted in a small number of coordinates. More formally, A code is a subset  $C  \subset \Sigma^m$, where $\Sigma$ is some finite alphabet. The {\em rate} of the code is denoted $R = \frac{\log|C|}{m\log|\Sigma|}$ and the {\em distance} of the code, denoted $\rho$, is the minimal Hamming distance between two codewords divided by $m$. It is easy to show that $\rho < 1- R$ and that {\em unique decoding} (i.e decoding a message uniquely from a corrupted codeword) is only possible from a fraction $(1-R)/2$ of errors. When the number of errors goes beyond $(1-R)/2$ one has to be satisfied with {\em list-decoding}, in which a  short list of possible  messages is returned (i.e all messages whose encodings are close to the received word). Non explicitly, one can show the existence of a code that can be list-decoded from $1-R-\eps$ errors with list-size bounded by $O(1/\eps)$. Obtaining an explicit construction of such a code (with efficient encoding/decoding) is a major open problem in coding theory. The first work to give explicit codes that can be list-decoded from $1-R-\eps$ errors was the paper of Guruswami and Rudra \cite{GuruswamiRudra08} which builds on earlier work by Parvaresh and Vardi \cite{PV05}. Their work showed that a certain family of codes, called {\em folded Reed-Solomon (RS) codes} can be list-decoded from $1- R - \eps$ errors with list size bounded by $m^{O(1/\eps)}$, where $m$ is the number of coordinates (or block length)
 of the code.

In a recent work, Guruswami \cite{Guruswami11} gave a new list-decoding algorithm for folded RS codes which have some nice advantages over previous decoding algorithms. Among these advantages is the property that the list of possible messages, returned by the decoder, is contained in a low dimensional subspace. More precisely, the code represents messages  as elements of $\F^n$, where $\F$ is a finite field of size $q \sim n$, and the list returned by the decoder is (quite surprisingly) a {\em subspace} of dimension $O(1/\eps)$. This immediately gives the size bound for the list of $q^{O(1/\eps)}$ mentioned above but also shows a way for improving further the list size. Guruswami observed that restricting the messages to come from a $((1/\eps), c(\eps))$-evasive set $S \subset \F^n$, instead of coming from the entire space $\F^n$, will reduce the list size to $c(\eps)$ and remove the dependency on the block length. In order for the rate to not degrade by much we need the size of $S$ to be sufficiently large, say $|S| > |\F|^{(1-\eps)n}$.

For this application to produce codes with efficient encoding/decoding, the evasive set $S$ must satisfy two explicitness conditions. The first is that messages can be encoded and decoded efficiently into $S$. The second condition is that, given a subspace (say, as a list of basis vectors), one can efficiently compute the intersection of this subspace with $S$. Our construction of subspace evasive sets satisfies both of these conditions  (see Section~\ref{sec-explicit}) and so we obtain the following theorem.
\begin{THM}
\label{thm:list-decodable-codes}
For every $R$ and $\eps$, there exists an explicit family of codes $C \subset \Sigma^m$ with rate $R$ that can be list-decoded from a fraction $1 - R - \eps$ of errors in quadratic time and with list size $(1/\eps)^{O(1/\eps)}$.
\end{THM}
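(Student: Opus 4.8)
The plan is to carry out the reduction already sketched in Section~\ref{sub:list_decodable_codes}: combine the improved folded Reed--Solomon list-decoder of Guruswami~\cite{Guruswami11} with the subspace evasive set of Theorem~\ref{THM-main}. Recall that \cite{Guruswami11} provides, for every rate $R_0>0$ and slack $\gamma>0$, an explicit folded RS code over a field $\F$ with $|\F|=q=\Theta(n)$ whose messages are naturally elements of $\F^n$, together with a decoder that on any received word runs in quadratic time and returns an affine subspace $V\subseteq\F^n$ of dimension $k=O(1/\gamma)$ containing every message whose codeword lies within relative Hamming distance $1-R_0-\gamma$ of the received word. By itself this only bounds the list size by $q^{O(1/\gamma)}=\poly(n)$; the gain comes from requiring messages to lie in a subspace evasive set.

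So I would fix $k=O(1/\eps)$, set a rate-loss parameter $\eps'=\eps/3$, and let $S\subseteq\F^n$ be the $(k,c)$-subspace evasive set supplied by Theorem~\ref{THM-main} with parameters $k$ and $\eps'$; thus $|S|>q^{(1-\eps')n}$ and $c=c(k,\eps')=(k/\eps')^{k}=(1/\eps)^{O(1/\eps)}$. (This is legitimate for $n$ large enough, since then $k\le n$ and the block size $k/\eps'$ is below both $n$ and $q$.) Define $C$ to be the folded RS code restricted to messages lying in $S$, encoded by composing the efficient injection of a message set of size $q^{(1-\eps')n}$ into $S$ (the first explicitness property, Section~\ref{sec-explicit}) with the efficient folded RS encoder. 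Then $C$ has the same block length and alphabet as the folded RS code, and since $|C|=|S|>q^{(1-\eps')n}$ its rate exceeds $(1-\eps')R_0$; taking $R_0=R/(1-\eps')=R+O(\eps)$ gives rate at least $R$ (and one can cut the message set down to obtain rate exactly $R$), while a routine computation shows that the choice $\gamma=R+\eps-R_0=\Theta(\eps)>0$ makes the decoding radius $1-R_0-\gamma$ equal to $1-R-\eps$, with $k=O(1/\gamma)=O(1/\eps)$ as promised.

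The decoder for $C$ is the obvious one: run Guruswami's decoder to obtain the dimension-$k$ affine subspace $V$; use the second explicitness property of Theorem~\ref{THM-main} (Section~\ref{sec-explicit}) to list $V\cap S$ in quadratic time from a basis of $V$; and pull each point of $V\cap S$ back through the message-into-$S$ injection. Correctness is immediate --- a message whose codeword is within relative distance $1-R-\eps$ of the received word lies in $V$ by Guruswami's guarantee and in $S$ by construction, hence in $V\cap S$ --- and the list size is $|V\cap S|\le c=(1/\eps)^{O(1/\eps)}$ precisely because $S$ is $(k,c)$-subspace evasive and $\dim V=k$. Since both the folded RS decoder and the subspace-intersection routine run in quadratic time, so does the whole procedure.

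The one thing that needs real work is establishing the two explicitness properties of $S$ invoked above, which is exactly the content of Section~\ref{sec-explicit}: (i) an efficiently computable and efficiently invertible injection of a message set of size $q^{(1-\eps')n}$ into $S$, and (ii) an algorithm listing $V\cap S$ in quadratic time given a basis of a $k$-dimensional affine subspace $V$. Both reduce to the block structure of $S$ together with the fact that inside each block the everywhere-finite variety is cut out by very simple polynomials (weighted sums of powers of the variables), so that intersecting with an affine subspace degenerates, block by block, into solving a low-degree system that can be handled directly; verifying the running time carefully is where the effort goes. Everything else --- Guruswami's decoding guarantee and the $d^k$-type intersection bound --- is quoted directly from \cite{Guruswami11} and Theorem~\ref{THM-main}, and the parameter bookkeeping above is elementary.
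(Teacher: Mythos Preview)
Your proposal is correct and follows essentially the same route as the paper: take Guruswami's folded RS code whose decoder outputs a $k$-dimensional subspace with $k=O(1/\eps)$, restrict messages to the subspace evasive set $S$ of Theorem~\ref{thm-evasive}, and invoke the explicitness results of Section~\ref{sec-explicit} (Claims~\ref{cla:sample} and~\ref{cla:compute-intersect}, together with Claim~\ref{cla-field} for the field choice) to handle encoding, decoding, and computing $V\cap S$. The only cosmetic difference is parameter bookkeeping---the paper sets the inner rate to $R$ and accepts outer rate $R'\ge R-\eps$ with radius $1-R'-2\eps$, whereas you set $R_0=R/(1-\eps')$ to hit rate exactly $R$---and one small imprecision: Claim~\ref{cla:compute-intersect} gives time polynomial in $(d_1)^k$ rather than ``quadratic'', but since $\eps$ is constant this is $O_\eps(1)$ and the quadratic bound indeed comes from Guruswami's decoder.
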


The use of evasive sets to enhance list-decoding is completely black-box and only uses the property that the returned list is a subspace of a certain dimension in a sufficiently large field. We give the proof of Theorem~\ref{thm:list-decodable-codes} in Section~\ref{sec-listdec}, stating the relevant claims from \cite{Guruswami11} that are needed for the black-box application.

Following \cite{Guruswami11}, Guruswami and Wang \cite{GuruswamiWang11} showed another family of codes with optimal distance list decoding and with the additional property that the list returned by the decoder is a subspace. This family of codes, called derivative codes (also called multiplicity codes in \cite{KSY11}), obtains roughly the same parameters as folded RS codes and can be also combined with our construction of evasive sets in the same way to reduce the list size.


\subsection{Affine and two-source extractors} 
\label{sub:affine_and_two_source_extractors}
The work of Pudl\'{a}k and R\"{o}dl~\cite{PudlakRodl04} showed that constructing $(n/2,c)$-subspace evasive sets $S \subset \F_2^n$ gives explicit constructions of bipartite Ramsey graphs. These are bipartite graphs that do not contain bipartite cliques or independent sets of certain size. A recent work of Ben-Sasson and Zewi \cite{BenSassonZewi11} explored this connection further and showed (under some number theoretic conjectures) that such sets can also be used to construct two-source {\em extractors} which are strong variants of bipartite Ramsey graphs. Another application given in \cite{BenSassonZewi11} was to the construction of {\em affine extractors} which are functions that have uniform output whenever the input chosen uniformly from a subspace of sufficiently high dimension. Both of these applications require that the construction be over a field of two elements. Our construction requires the field to be at least of size $n$ and so is not useful for these applications. An important direction for progress is to generalize our construction for smaller fields. Alternatively, one can try to generalize the approach of \cite{BenSassonZewi11} to larger fields and then try to use our construction to obtain better extractors (affine or two-source).

\subsection{Organization} 
\label{sub:organization}

Section~\ref{sec-equations} contains the main construction of everywhere-finite varieties (Theorem~\ref{thm-explicit-efv}). In Section~\ref{sec-evasive}, we show how to compose this basic construction to obtain our  main theorem, Theorem~\ref{thm-evasive}, which gives explicit evasive sets. In Section~\ref{sec-explicit} we prove several claims which deal with the explicitness of our construction, and use them, in  Section~\ref{sec-listdec} to derive Theorem~\ref{thm:list-decodable-codes}. Appendix~\ref{sec:preliminaries} contains some basic results on Fourier analysis that are used in part of Section~\ref{sec-evasive}.

\section{Everywhere-finite varieties}\label{sec-equations}

Let $\F$ be a field and $\bF$ its algebraic closure (recall that the algebraic closure is always infinite, even if $\F$ is finite). A variety in $\bF^n$ is the set of common zeros of one or more polynomials. Given $k$ polynomials $f_1,\ldots,f_k \in \bF[x_1,\ldots,x_n]$, we denote the variety they define as
$$
\V(f_1,\ldots,f_k) := \{ x \in \bF^n \,|\, f_1(x) = \ldots = f_k(x) = 0\}.
$$
The {\em dimension} of a variety is a generalization of the notion of dimension for subspaces and can be thought of, informally, as the number of `degrees of freedom' the variety has. In particular, $k$ generic polynomials  $f_1,\ldots,f_k$ define a variety  $\V(f_1,\ldots,f_k)$ of dimension $n-k$. It is well known that the intersection of an $(n-k)$-dimensional variety $V \subset \bF^n$ with a generic $k$ dimensional affine subspace $H \subset \bF^n$ is finite\footnote{For a precise definition of dimension and proofs of its basic properties we refer the reader to any elementary text on Algebraic Geometry (e.g \cite{Shaf}).}. In the following we will not rely on any of these properties and keep the discussion self-contained. Our main result in this section is a construction of an explicit variety $V$ where this holds for {\em all} affine subspaces $H$ of dimension $k$. Using Bezout's theorem (Theorem~\ref{thm-bezout}) and the bound on the degrees of the polynomials defining  $V$ we will also get an explicit uniform bound on the size of the intersections $|\V \cap H|$. We start with the formal definition.

\begin{define}[Everywhere-finite variety]
Let $f_1,\ldots,f_k \in \bF[x_1,\ldots,x_n]$ be polynomials. The variety $\V=\V(f_1,\ldots,f_k)$ is {\em everywhere-finite} if for any affine subspace $H \subset \bF^n$ of dimension $k$, the intersection $\V \cap H$ is finite.
\end{define}

The importance of showing that the intersection is finite comes from Bezout's theorem, which allows one to give explicit bounds on the intersection size, given that it is finite. This result can be found in most introductory texts on Algebraic Geometry \cite{Shaf} (for an elementary proof of this particular formulation see \cite{Sch95}).

\begin{thm}[Bezout]\label{thm-bezout}
Let $g_1,\ldots,g_k \in \bF[t_1,\ldots,t_k]$ be polynomials. If $\V(g_1,\ldots,g_k)$ is finite then $$|\V(g_1,\ldots,g_k)| \le \prod_{i=1}^k \deg(g_i).$$
\end{thm}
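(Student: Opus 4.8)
This is the classical affine form of Bézout's inequality, so in the paper one just cites it; were I to reprove it I would run the Hilbert‑function argument, which is elementary linear algebra over $\bF$ apart from one structural input.

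\emph{Setup.} Write $d_i=\deg(g_i)$, and assume $\V=\V(g_1,\ldots,g_k)=\{p_1,\ldots,p_N\}$ is finite and nonempty (the empty case being trivial). Let $\bF[t]_{\le D}$ denote the polynomials in $t_1,\ldots,t_k$ of total degree $\le D$, a space of dimension $\binom{D+k}{k}$, and let $I$ be the ideal of polynomials vanishing on $\V$. Since $\bF$ is infinite, Lagrange interpolation produces, for each $a$, a polynomial of degree $\le N-1$ vanishing on $\V\setminus\{p_a\}$ but not at $p_a$; hence the evaluation map $f\mapsto(f(p_1),\ldots,f(p_N))$ is already surjective on $\bF[t]_{\le D}$ for all $D\ge N-1$, so $\dim_\bF\bigl(\bF[t]_{\le D}/(I\cap\bF[t]_{\le D})\bigr)=N$ for large $D$.

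\emph{Reduction to a single quotient.} Because $(g_1,\ldots,g_k)\subseteq I$, for every large $D$ we get
$$N=\dim_\bF\bigl(\bF[t]_{\le D}/(I\cap\bF[t]_{\le D})\bigr)\ \le\ \dim_\bF\bigl(\bF[t]_{\le D}/((g_1,\ldots,g_k)\cap\bF[t]_{\le D})\bigr)\ \le\ \dim_\bF\bigl(\bF[t_1,\ldots,t_k]/(g_1,\ldots,g_k)\bigr),$$
the last step since the middle space embeds into the quotient ring. So everything reduces to the identity $\dim_\bF\bigl(\bF[t_1,\ldots,t_k]/(g_1,\ldots,g_k)\bigr)=\prod_{i=1}^k d_i$. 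Finiteness of $\V$ makes this quotient a finitely generated $\bF$‑algebra of Krull dimension $0$, hence finite‑dimensional, and also makes $g_1,\ldots,g_k$ a system of parameters of the Cohen--Macaulay ring $\bF[t_1,\ldots,t_k]$, hence a regular sequence; therefore the Koszul complex on $g_1,\ldots,g_k$ is an exact graded resolution of the quotient. Reading off the Hilbert function from this resolution and summing over all degrees is a short routine computation — it amounts to a $k$‑fold mixed finite difference of $\binom{D+k}{k}$ in which every term cancels except the pure mixed one — and gives exactly $\prod_i d_i$, completing the argument.

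\emph{Where the work is.} The only step that is not bare‑hands is the claim that $k$ polynomials cutting out a finite set in $\bF^k$ form a regular sequence (equivalently, that their only syzygies are the Koszul ones), which is Cohen--Macaulayness of the polynomial ring. To keep the write‑up self‑contained one can instead prove this by induction on $k$ using a generic linear change of variables and projection onto $k-1$ of the coordinates — arranging that each $g_i$ has a constant leading coefficient in $t_k$ and controlling the relevant resultants — or one can sidestep it completely by homogenizing $g_1,\ldots,g_k$ to forms of degrees $d_1,\ldots,d_k$ in $\bF[t_0,t_1,\ldots,t_k]$ and invoking the projective Bézout theorem, noting that the points of $\V$ are isolated points of the projective intersection. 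All remaining steps are bookkeeping with dimensions of spaces of polynomials.
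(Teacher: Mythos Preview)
You correctly note that the paper does not prove this: it is stated as a classical result with citations to Shafarevich and to Schmidt for an elementary proof, so there is no in-paper argument to compare against, and your first sentence already matches the paper exactly.

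Your supplementary sketch, however, has a real slip. The claimed identity $\dim_{\bF}\bigl(\bF[t_1,\ldots,t_k]/(g_1,\ldots,g_k)\bigr)=\prod_i d_i$ is false when the $g_i$ are inhomogeneous, even if they form a regular sequence. With $k=2$, $g_1=t_1$ and $g_2=t_2(t_1+1)$, the variety is $\{(0,0)\}$, the pair is a regular sequence, yet the quotient is $\bF$, of dimension $1<2=d_1d_2$. The point is that for inhomogeneous elements the Koszul complex, while exact, is not a \emph{graded} free resolution, so one cannot ``read off the Hilbert function and sum over all degrees'' to obtain $\prod_i d_i$; that computation is only valid for homogeneous regular sequences. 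What is true --- and all you need for the chain $N\le\dim_{\bF}\bigl(\bF[t]/(g_1,\ldots,g_k)\bigr)\le\prod_i d_i$ --- is the inequality, and that is exactly what the homogenization route you mention at the end is designed to give (comparing, degree by degree, with the graded quotient by the homogenized forms $G_i$, and then dehomogenizing). Even there one has to contend with the possibility that the $G_i$ share a positive-dimensional component on the hyperplane at infinity, so the clean write-up still takes some care. In short: your opening remark that one simply cites B\'ezout is the operative content here; the Hilbert-function sketch would need its equality weakened to an inequality, and the graded-Koszul step replaced, before it stands on its own.
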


For everywhere-finite varieties this gives the following immediate corollary.

\begin{cor}\label{cor-size-bound-bezout}
Let $f_1,\ldots,f_k \in \bF[x_1,\ldots,x_n]$ be polynomials such that $\V=\V(f_1,\ldots,f_k)$ is everywhere-finite. Then for any $k$-dimensional affine subspace $H \subset \bF^n$ we have$$
|\V \cap H| \le \prod_{i=1}^k \deg(f_i).
$$
\end{cor}

\begin{proof}
Let the $k$-dimensional affine subspace $H$ be given as the image of an affine map $\ell=(\ell_1,\ldots,\ell_n):\bF^k \to \bF^n$. Let $g_i \in \bF[t_1,\ldots,t_k]$ denote the restriction of $f_i$ to $H$, i.e.
$$
g_i(t_1,\ldots,t_k) := f_i(\ell_1(t_1,\ldots,t_k),\ldots,\ell_n(t_1,\ldots,t_k)).
$$
Clearly $\V \cap H = \V(g_1,\ldots,g_k)$ and $\deg(g_i) \leq \deg(f_i)$. The corollary now follows from Theorem~\ref{thm-bezout}.
\end{proof}

We will now describe an explicit construction of an everywhere-finite variety. We will need the following definition:
A $k \times n$ matrix (where $k \le n$) is {\em $k$-regular} if all its $k \times k$ minors are regular (i.e have non-zero determinant).
For example, if $\F$ is a field with at least $n$ distinct nonzero elements $\gamma_1,\ldots,\gamma_n$ then
$A_{i,j}=\gamma_j^i$ is $k$-regular.

\begin{thm}[Construction of an everywhere-finite variety]\label{thm-explicit-efv}
Let $1 \le k \le n$ be parameters and $\F$ be a field.
Let $A$ be a $k \times n$ matrix with coefficients in $\F$ which is $k$-regular. Let $d_1>d_2>\ldots>d_n \ge 1$ be integers.
Let the polynomials $f_1,\ldots,f_k \in \F[x_1,\ldots,x_n]$ be defined as follows:
$$
f_i(x_1,\ldots,x_n) := \sum_{j=1}^n A_{i,j} \cdot x_j^{d_j}.
$$
Then $\V=\V(f_1,\ldots,f_k)$ is everywhere-finite. In particular, for any $k$-dimensional affine subspace $H \subset \bF^n$ we have $|\V \cap H| \le (d_1)^k$.
\end{thm}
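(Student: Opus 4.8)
The goal is to show that for every $k$-dimensional affine subspace $H \subset \bF^n$, the intersection $\V \cap H$ is finite; the quantitative bound $(d_1)^k$ then follows immediately from Corollary~\ref{cor-size-bound-bezout} since $\deg(f_i) \le d_1$ for every $i$. So the whole task reduces to proving finiteness. The plan is to parametrize $H$ as the image of a full-rank affine map $\ell(t) = Mt + b$ with $M \in \bF^{n\times k}$ of rank $k$ and $b \in \bF^n$, substitute into the $f_i$'s, and argue that the resulting system in $t_1,\dots,t_k$ has only finitely many solutions. Writing out $f_i(\ell(t)) = \sum_{j=1}^n A_{i,j}(\langle M_j, t\rangle + b_j)^{d_j}$ where $M_j$ is the $j$th row of $M$, the key observation is that since $d_1 > d_2 > \cdots > d_n$ are strictly decreasing, the distinct coordinates $x_j$ contribute monomials of distinct top degrees, and we want to extract from this a genuine dependence on each of the $k$ ``free'' directions.

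The natural strategy is \emph{elimination / induction on $k$}. First, since $M$ has rank $k$, after a linear change of coordinates on $\bF^k$ (which does not affect finiteness of the solution set) and a permutation of the $n$ coordinates of $\bF^n$ (which preserves $k$-regularity of $A$ only up to reordering columns — one must be slightly careful here, but $k$-regularity is about all $k\times k$ minors being nonsingular, so column permutations preserve it), we may assume the top $k\times k$ block of $M$ is the identity, i.e.\ $x_j = t_j + b_j$ for $j=1,\dots,k$ and $x_j = \langle M_j,t\rangle + b_j$ for $j > k$. Then the degree of $f_i(\ell(t))$ as a polynomial in $t$ is controlled: the monomial $t_1^{d_1}$ appears in $f_1(\ell(t))$ with coefficient $A_{1,1}$, and more generally one wants to set up a triangular structure. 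Here the precise route I would take is: consider the system as polynomials in $t_1$ alone, with coefficients in $\bF[t_2,\dots,t_k]$; the leading term in $t_1$ of $f_i(\ell(t))$ is $A_{i,1} t_1^{d_1}$ (using that $d_1$ is strictly the largest, so no cancellation from other $x_j$'s can reach degree $d_1$ in $t_1$ unless some $M_{j,1}\ne 0$ with the same issue, which forces looking at the $k$-regularity of $A$). Since $A_{1,1}\ne 0$ (it is a $1\times 1$ minor, hence a $k$-regular hypothesis does give nonvanishing — actually one needs $A_{1,1}\ne0$, which follows since the $k\times k$ minors being nonsingular forces, by a standard argument, all $j\times j$ minors along appropriate positions to be nonzero; alternatively choose the labeling so this holds), the equation $f_1(\ell(t))=0$ expresses $t_1$ as algebraic of bounded degree over $\bF[t_2,\dots,t_k]$.

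The cleaner formulation, and the one I expect the authors use, is to argue by \textbf{induction on $k$} while tracking which coordinates are ``pinned.'' Given the system $f_1(\ell(t)) = \cdots = f_k(\ell(t)) = 0$, take a suitable $\bF$-linear combination $\sum_i \lambda_i f_i$ to eliminate the highest-degree variable: because $A$ is $k$-regular, for any $k$ columns $j_1 < \cdots < j_k$ the corresponding $k\times k$ submatrix is invertible, so we can choose coefficients making the combined polynomial have controlled support, and in particular reduce to a system with one fewer equation in one fewer ``effective'' variable, whose matrix is still regular in the relevant sense. The base case $k=0$ (or $k=1$) is where finiteness is transparent: a single equation $\sum_j A_{1,j}(\langle M_j,t\rangle+b_j)^{d_j} = 0$ in one variable $t_1$, with $M$ full rank $1$ so some $M_{j,1}\ne 0$, and the strictly decreasing degrees guaranteeing the leading coefficient is a nonzero scalar times $A_{1,j^*}\ne 0$ — hence a nonzero univariate polynomial with finitely many roots. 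The main obstacle is exactly this bookkeeping: one must verify that after each elimination step the hypotheses ($k$-regularity of the matrix, strict monotonicity of the exponents, full rank of the affine map) are preserved on the smaller system, and that the leading coefficient in the variable being eliminated is genuinely nonzero — this is where the $k$-regularity of $A$ is essential and is the crux of the argument; everything else is the routine observation that ``nonzero univariate polynomial'' plus induction yields finiteness.
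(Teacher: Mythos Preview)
Your proposal identifies the right starting point (parametrize $H$, substitute, prove finiteness, then invoke Corollary~\ref{cor-size-bound-bezout}), but the elimination/induction strategy has genuine gaps, and you miss the key idea the paper actually uses.

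First, a concrete error: you propose permuting the $n$ ambient coordinates so that the top $k\times k$ block of $M$ becomes the identity. Column permutations do preserve $k$-regularity of $A$, but they destroy the hypothesis $d_1>d_2>\cdots>d_n$: after the permutation the exponent attached to coordinate $j$ is $d_{\sigma^{-1}(j)}$, and these are no longer decreasing. The whole argument hinges on this ordering --- it is precisely what forces the leading contribution at each pivot to be a pure power of a single $t_i$ --- so you cannot permute. The paper instead keeps the coordinates fixed and simply records the echelon pivot positions $j_1<\cdots<j_k$ (your linear change on $\bF^k$ alone already gives this). Also, $k$-regularity does not imply $A_{1,1}\neq 0$ when $k>1$; your aside about ``$1\times 1$ minors'' is incorrect.

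Second, and more importantly, the induction you sketch does not close. After the echelon parametrization and the row-reduction over $A$ (both of which you correctly anticipate), one obtains
\[
g_i(t)=t_i^{\,d_{j_i}}+\sum_{j\notin J}c_{ij}\,\ell_j(t)^{d_j}.
\]
For $j\notin J$ with $j<j_i$ the term $\ell_j(t)^{d_j}$ involves only $t_1,\dots,t_{i-1}$ but has degree $d_j>d_{j_i}$; thus $g_i$ is \emph{not} of the form ``$t_i^{e_i}+{}$(lower total degree)'', and there is no evident way to peel off one equation and one variable while preserving the hypotheses. The paper's decisive step is a further \emph{polynomial} substitution $t_i\mapsto t_i^{D_i}$ with $D=\prod_i d_{j_i}$ and $D_i=D/d_{j_i}$, which equalizes all the pivot exponents. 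A short degree count (using simultaneously the echelon structure of $\ell$ and the ordering $d_1>\cdots>d_n$) then shows the transformed polynomials satisfy $h_i(t)=t_i^{\,D}+r_i(t)$ with total degree $\deg r_i<D$. For systems of this special shape finiteness is proved directly, not by induction: the monomials $t_1^{a_1}\cdots t_k^{a_k}$ with each $a_i<D$ span $\bF[t]/(h_1,\dots,h_k)$, so the quotient is finite-dimensional and $|\V(h_1,\dots,h_k)|\le D^k$. This polynomial substitution is the idea your plan is missing.
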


We prove Theorem~\ref{thm-explicit-efv} in the remainder of this section. Let $H \subset \bF^n$ be a $k$-dimensional affine subspace. Our goal is to show that $\V \cap H$ is finite, and then the size bound follows from Corollary~\ref{cor-size-bound-bezout}. The first step is to present $H$ as the image of an affine map $\ell : \bF^k \mapsto \bF^n$ with a convenient choice of basis. In the following let $t=(t_1,\ldots,t_k) \in \bF^k$ and $x=(x_1,\ldots,x_n) \in \bF^n$.

\begin{claim}\label{cla-ell}
There exists an affine map $\ell=(\ell_1,\ldots,\ell_n):\bF^k \to \bF^n$ whose image is $H$ and such that the following holds. There exist $k$ indices $1 \leq j_1 < j_2 < \ldots < j_k \leq n$ such that
\begin{enumerate}
\item For all $i \in [k]$, $\ell_{j_i}(t) = t_i$.
\item If $j < j_1$ then $\ell_{j}(t) \in \bF$ (i.e $\ell_j$ is constant).
\item If $j< j_{i}$ for $i>1$ then $\ell_j(t)$ is an affine function just of the variables $t_1,t_2,\ldots,t_{i-1}$.
\end{enumerate}
\end{claim}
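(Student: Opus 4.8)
The plan is to produce $\ell$ by performing Gaussian elimination on a system of defining equations for $H$, choosing the pivot columns greedily from left to right. Concretely, write $H$ as the solution set of an affine system, or equivalently fix any affine parametrization $\ell^0 : \bF^k \to \bF^n$ with image $H$, so $\ell^0(t) = b + Mt$ for some $b \in \bF^n$ and some $n \times k$ matrix $M$ of rank $k$ (rank $k$ since $H$ has dimension $k$, hence $\ell^0$ is injective). The goal is to re-choose the affine coordinates $t = (t_1,\dots,t_k)$ on $H$ — that is, replace $t$ by $\phi(t)$ for some invertible affine map $\phi$ of $\bF^k$ — so that the resulting parametrization $\ell = \ell^0 \circ \phi^{-1}$ has the staircase form described.

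First I would pick the indices $j_1 < \dots < j_k$ as the lexicographically-first set of $k$ row indices of $M$ that form an invertible $k \times k$ submatrix; equivalently, run row-reduction on $M$ to reduced column echelon form, letting $j_i$ be the row containing the $i$-th pivot. Because rank$(M) = k$, such indices exist. After replacing $M$ by $M \cdot (\text{the corresponding change of coordinates on } \bF^k)$, we may assume the $j_i$-th row of $M$ is the $i$-th standard basis vector $e_i^\top$; this gives property (1), $\ell_{j_i}(t) = (\text{const}) + t_i$, and we can further absorb the constant by an affine shift of $t_i$ so that $\ell_{j_i}(t) = t_i$ exactly. The key structural point is then property (3): I claim that for a row index $j$ with $j < j_i$ (and $j \notin \{j_1,\dots,j_{i-1}\}$), the $j$-th row of $M$ is supported on the first $i-1$ pivot columns, i.e. $\ell_j(t)$ depends only on $t_1,\dots,t_{i-1}$. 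This is exactly the echelon property: if the $j$-th row had a nonzero entry in pivot column $i' \ge i$, then — since $j$ comes before the pivot row $j_{i'}$ of that column — the set $\{j_1,\dots,j_{i'-1}, j\}$ together with columns $1,\dots,i'-1,i'$ would contradict minimality/lexicographic-leastness of the pivot choice (the column $i'$ pivot should have been at row $j$, not $j_{i'}$). Property (2) is the special case $i = 1$: if $j < j_1$ then its row is supported on the empty set of pivot columns, so $\ell_j$ is constant.

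The main obstacle is getting the echelon argument and the indexing bookkeeping exactly right — in particular making sure "the lexicographically-first invertible submatrix" really does force the support condition in (3), and handling the boundary cases cleanly (rows $j$ with $j_{i-1} < j < j_i$, and rows beyond $j_k$, which are unconstrained). I expect this to be a short combinatorial-linear-algebra argument with no analytic content, so the write-up is mostly a matter of carefully invoking the column-echelon normal form. An alternative, perhaps cleaner, route avoids talking about $M$ directly: describe $H$ by $n - k$ independent affine equations in $x_1,\dots,x_n$, row-reduce that system from the left so that each equation has a leftmost "free" variable, let $j_1 < \dots < j_k$ be the complementary (dependent/parametrizing) indices, set $t_i := x_{j_i}$, and solve; one then checks the same echelon structure gives (1)–(3). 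Either way, the single point to verify is that the greedy left-to-right pivot choice yields the triangular dependence pattern, which is standard.
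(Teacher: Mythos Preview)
Your proposal is correct and takes essentially the same approach as the paper: both produce $\ell$ by putting an arbitrary affine parametrization of $H$ into column-echelon form via a change of coordinates on $\bF^k$. The paper phrases the greedy pivot choice directly --- define $j_i$ as the first index for which $\ell'_{j_i}$ is not an affine function of $\ell'_{j_1},\dots,\ell'_{j_{i-1}}$ --- which sidesteps the lex-first-invertible-submatrix bookkeeping you flagged as the main obstacle.
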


\begin{proof}
Let $\ell':\bF^k \to \bF^n$ be an arbitrary affine map whose image is $H$. We construct $\ell$ by a basis change of $\ell'$ which puts it in an upper-echelon form. That is, let $j_1$ be the minimal index such that $\ell'_{j_1}(t)$ is not constant. We take $\ell_{j_1}(t)=t_1$. Let $j_2$ be the minimal index after $j_1$ such that $\ell'_{j_2}(t)$ is not an affine function of $\ell'_{j_1}(t)$. We take $\ell_{j_2}(t)=t_2$, and we have that $\ell_{j}(t)$ for $j_1<j<j_2$ are affine functions of $t_1$. Generally, let $j_i$ be the minimal index after $j_{i-1}$ such that $\ell'_{j_i}(t)$ is not an affine function of $\ell'_{j_1}(t),\ldots,\ell'_{j_{i-1}}(t)$. We take $\ell_{j_i}(t)=t_i$ and have that $\ell_j(t)$ for $j_{i-1}<j<j_i$ are affine functions of $t_1,\ldots,t_{i-1}$. Obviously, for $j>j_k$ we have that $\ell_j(t)$ are affine functions of all $t_1,\ldots,t_k$.
\end{proof}

Let $\ell = (\ell_1,\ldots,\ell_n):\bF^k \to \bF^n$ be given by Claim~\ref{cla-ell} and let $1 \leq j_1 < j_2 < \ldots < j_k \leq n$ be the indices given by the claim. Let
$J  := \{j_1,\ldots,j_k\}$. Our goal is to show that the following system has a finite number of solutions:
$$
f_i(\ell_1(t_1,\ldots,t_k),\ldots,\ell_n(t_1,\ldots,t_k)) = 0, \,\, i \in [k].
$$

Clearly, applying an invertible linear transformation on the set $f_1,\ldots,f_k$ (replacing each $f_i$ with a linear combination of $f_1,\ldots,f_k$) will not affect the number of solutions. Our next step is to find such a linear transformation that will put the $f_i$'s in a more convenient form, eliminating some of their coefficients.
\begin{claim}\label{cla-linearf}
Let $f(x) = (f_1(x),\ldots,f_k(x))$. There exist $k$ linearly independent vectors $u_1,\ldots,u_k \in \bF^k$ such that, for all $i \in [k]$,
\begin{equation}\label{eq-tildef}
\ip{u_i}{f(x)} = x_{j_i}^{d_{j_i}} + \sum_{j \in [n] \setminus J} c_{ij} \cdot x_j^{d_j},
\end{equation} where the coefficients $c_{ij}$ are elements of $\bF$.
\end{claim}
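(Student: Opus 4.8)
The plan is to reduce the claim to an elementary statement about the $k\times k$ submatrix $A_J$ of $A$ consisting of the columns indexed by $J=\{j_1,\ldots,j_k\}$. The first step is the observation that for any $u\in\bF^k$ the combination $\ip{u}{f(x)}$ is again a weighted sum of powers: since $f_i(x)=\sum_{j=1}^n A_{i,j}x_j^{d_j}$, we have $\ip{u}{f(x)}=\sum_{j=1}^n (u^{T}A)_j\,x_j^{d_j}$. Hence producing a combination of the form~\eqref{eq-tildef} is the same as choosing $u_i$ so that the coefficient row vector $u_i^{T}A\in\bF^n$, when restricted to the coordinates in $J$, equals the standard basis vector $e_i^{T}$; the coordinates outside $J$ are unconstrained and become the $c_{ij}$.

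The second step uses the hypothesis that $A$ is $k$-regular. By definition this means that the $k\times k$ minor $A_J$ has nonzero determinant, so $A_J$ is invertible over $\F$, hence over $\bF$. The requirement ``$u_i^{T}A$ restricted to $J$ equals $e_i^{T}$'' reads exactly $u_i^{T}A_J=e_i^{T}$, so I would simply take $u_i^{T}$ to be the $i$-th row of $A_J^{-1}$, i.e. $u_i=(A_J^{-1})^{T}e_i$. With this choice $\ip{u_i}{f(x)} = x_{j_i}^{d_{j_i}}+\sum_{j\in[n]\setminus J} c_{ij}x_j^{d_j}$, where $c_{ij}:=(u_i^{T}A)_j\in\bF$, which is precisely~\eqref{eq-tildef}.

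For linear independence, note that the matrix whose rows are $u_1^{T},\ldots,u_k^{T}$ is by construction $A_J^{-1}$, which is invertible; hence $u_1,\ldots,u_k$ are linearly independent, completing the proof. I do not expect any real obstacle here: the only structural input is that $k$-regularity of $A$ forces the relevant $k\times k$ submatrix to be invertible, and everything else is bookkeeping of indices. Equivalently, one can first replace the whole system $f=(f_1,\ldots,f_k)$ by the combination $A_J^{-1}f$, which normalizes the $J$-block of coefficients to the identity matrix, and then read off the $u_i$ as the rows of $A_J^{-1}$.
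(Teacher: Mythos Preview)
Your proof is correct and is essentially identical to the paper's: both take $u_1,\ldots,u_k$ to be the rows of the inverse of the $k\times k$ submatrix of $A$ on the columns $J$, using $k$-regularity to guarantee invertibility, and then read off the $c_{ij}$ as the remaining entries of $u_i^{T}A$. Your write-up is slightly more explicit about the linear-independence step (the paper leaves it implicit in the invertibility of $(A')^{-1}$), but the argument is the same.
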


\begin{proof}
Recall that by definition $f_i(x)=\sum_{j=1}^n A_{i,j} \cdot x_j^{d_j}$ where $A$ is a $k$-regular matrix. Let $A'$ be the $k \times k$ minor of $A$ given by restriction to columns $j_1,\ldots,j_k$. Since $A$ is $k$-regular we have that $A'$ is regular.
Let $u_1,\ldots,u_k \in \F^k$ denote the rows of $(A')^{-1}$. We thus have that $u_i A' = e_i$ where $e_i$ is the $i$-th unit vector. That is, $\ip{u_i}{f(x)} = x_{j_i}^{d_{j_i}} + \sum_{j \notin J} c_{ij} \cdot x_j^{d_j}$ where
$c_{ij}$ is the inner product of $u_i$ and the $j$-th column of $A$.
\end{proof}

Let $u_1,\ldots,u_k$ be the vectors given by Claim~\ref{cla-linearf} and denote
$$
\tilde f_i(x)  := \ip{u_i}{f(x)}.
$$
Let us also denote
$$
g_i(t_1,\ldots,t_k) := \tilde f_i(\ell_1(t_1,\ldots,t_k),\ldots,\ell_n(t_1,\ldots,t_k)).
$$
Recall that, from the above discussion, our goal is to show that the system $\{g_i(t)=0: i \in [k]\}$ has a finite number of solutions in $\bF^k$. By Claims~\ref{cla-ell} and~\ref{cla-linearf} we have that
\begin{equation}\label{eq-gi}
g_i(t) = t_i^{d_{j_i}} + \sum_{j \in [n] \setminus J} c_{ij} \cdot \ell_j(t)^{d_j}.
\end{equation}
We now perform one final transformation on our system. Contrary to the previous transformations which were linear transformations, this will be a polynomial transformation. Let
$$
D := \prod_{i=1}^k d_{j_i}
$$
and let
$$
D_i := \frac{D}{d_{j_i}}.
$$
For $i \in [k]$ define
$$
h_i(t_1,\ldots,t_k) := g_i\left( t_1^{D_1},\ldots, t_k^{D_k} \right).
$$
We first note that in order to show that $\V(g_1,\ldots,g_k)$ is finite it suffices to show that $\V(h_1,\ldots,h_k)$ is finite.

\begin{claim}\label{cla-g-finite-if-h-finite}
$|\V(g_1,\ldots,g_k)| \le |\V(h_1,\ldots,h_k)|$.
\end{claim}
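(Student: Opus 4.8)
The plan is to exhibit a surjection from $\V(h_1,\ldots,h_k)$ onto $\V(g_1,\ldots,g_k)$, which immediately gives the desired inequality on sizes. The natural candidate is the coordinate-wise power map
$$
\phi : \bF^k \to \bF^k, \qquad \phi(t_1,\ldots,t_k) = \left( t_1^{D_1}, \ldots, t_k^{D_k} \right).
$$
Note $\phi$ is a well-defined polynomial map since each $D_i = D/d_{j_i}$ is a positive integer (all $d_{j_i} \ge 1$). By the very definition $h_i(t) = g_i(\phi(t))$, so if $t \in \V(h_1,\ldots,h_k)$ then $g_i(\phi(t)) = h_i(t) = 0$ for all $i$, i.e. $\phi$ maps $\V(h_1,\ldots,h_k)$ into $\V(g_1,\ldots,g_k)$.

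The heart of the argument is that this restricted map is onto. Given any point $s = (s_1,\ldots,s_k) \in \V(g_1,\ldots,g_k)$, I want a preimage $t \in \V(h_1,\ldots,h_k)$. This is exactly where it matters that we have passed to the algebraic closure $\bF$: for each $i$, the polynomial $z^{D_i} - s_i$ has a root $t_i \in \bF$ (if $s_i = 0$ take $t_i = 0$). Setting $t = (t_1,\ldots,t_k)$ we get $\phi(t) = s$, and then $h_i(t) = g_i(\phi(t)) = g_i(s) = 0$ for all $i$, so $t \in \V(h_1,\ldots,h_k)$ and $\phi(t) = s$. Hence $\phi$ restricts to a surjection $\V(h_1,\ldots,h_k) \twoheadrightarrow \V(g_1,\ldots,g_k)$, and therefore $|\V(g_1,\ldots,g_k)| \le |\V(h_1,\ldots,h_k)|$ (with the convention that the inequality is vacuous if the right-hand side is infinite).

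There is essentially no genuine obstacle here; the only point requiring care is the appeal to algebraic closedness to produce $D_i$-th roots, which is precisely the reason the whole discussion takes place over $\bF$ rather than $\F$. It is worth emphasizing in the write-up that we do not need $\phi$ to be injective — surjectivity alone yields the inequality in the direction we want, namely that finiteness of $\V(h_1,\ldots,h_k)$ forces finiteness of $\V(g_1,\ldots,g_k)$, which is how the claim will be used in the remainder of the proof of Theorem~\ref{thm-explicit-efv}.
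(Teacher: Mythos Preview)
Your proof is correct and essentially identical to the paper's: the paper constructs, for each $w \in \V(g_1,\ldots,g_k)$, a preimage $w' \in \V(h_1,\ldots,h_k)$ by choosing $w'_i$ to be some $D_i$-th root of $w_i$ in the algebraically closed field $\bF$, which is exactly your surjectivity argument (phrased as an injection $\V(g)\hookrightarrow\V(h)$ rather than a surjection $\V(h)\twoheadrightarrow\V(g)$).
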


\begin{proof}
For each $w \in \V(g)$ we can define $w' \in \V(h)$ by letting $w'_i$ be some $D_i$ root of $w_i$ (it exists
since $\bF$ is algebraically closed). Clearly distinct elements in $\V(g)$ are mapped to distinct elements in $\V(h)$.
\end{proof}

The reason for these transformations is that the final polynomials $h_i$ have a specifically nice form: they are the sum of $t_i^{D}$ with a polynomial of lower total degree.

\begin{claim}\label{cla-degR}
For all $i \in [k]$ we have that
$$
h_i(t_1,\ldots,t_k)=t_i^D+r_i(t_1,\ldots,t_k)
$$
where $\deg(r_i)<D$.
\end{claim}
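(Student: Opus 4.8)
The plan is to push the substitution $t_\ell \mapsto t_\ell^{D_\ell}$ through the expression~(\ref{eq-gi}) for $g_i$ monomial by monomial, controlling total degrees by combining the echelon structure of $\ell$ from Claim~\ref{cla-ell} with the fact that the exponents $D_1,\dots,D_k$ are \emph{increasing}. As a preliminary observation I would record that, since $d_1 > d_2 > \cdots > d_n$ and $j_1 < j_2 < \cdots < j_k$, we have $d_{j_1} > d_{j_2} > \cdots > d_{j_k}$, and therefore $D_1 < D_2 < \cdots < D_k$ because $D_i = D/d_{j_i}$. This monotonicity is the engine of the whole argument.

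Next, apply the substitution to~(\ref{eq-gi}). The leading term $t_i^{d_{j_i}}$ of $g_i$ becomes $(t_i^{D_i})^{d_{j_i}} = t_i^{D_i d_{j_i}} = t_i^{D}$, which is exactly the claimed main term. It then remains to show that $r_i(t) := \sum_{j \in [n]\setminus J} c_{ij}\,\ell_j(t_1^{D_1},\dots,t_k^{D_k})^{d_j}$ has degree strictly less than $D$.

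To bound one summand, fix $j \in [n]\setminus J$ and let $p$ be the index with $j_{p-1} < j < j_p$, using the conventions $j_0 = 0$, $j_{k+1} = n+1$. By Claim~\ref{cla-ell}, $\ell_j$ is an affine function of $t_1,\dots,t_{p-1}$ only (a constant when $p=1$). Since $\ell_j$ is affine it has no cross terms, so after the substitution each of its monomials is either a constant or some $t_\ell^{D_\ell}$ with $\ell \le p-1$; by the monotonicity $D_1 < \cdots < D_k$ the substituted polynomial $\ell_j(t_1^{D_1},\dots,t_k^{D_k})$ has total degree at most $D_{p-1}$, where we set $D_0 := 0$. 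Raising to the power $d_j$, the $j$-th summand of $r_i$ has total degree at most $D_{p-1}\,d_j$. When $p=1$ this is $0 < D$. When $p \ge 2$, write $D_{p-1}\,d_j = D\cdot d_j / d_{j_{p-1}}$; since $j > j_{p-1}$ and the $d$'s are strictly decreasing we have $d_j < d_{j_{p-1}}$, hence $D_{p-1}\,d_j < D$. Thus every summand of $r_i$ has degree $< D$, so $\deg(r_i) < D$, and the claim follows.

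The one point that requires care is the bookkeeping: correctly identifying, for each coordinate $j \notin J$, which block $(j_{p-1},j_p)$ it lies in and hence which variables $\ell_j$ genuinely depends on, and then observing that the \emph{strict} decrease of the $d_i$'s upgrades the inequality $j > j_{p-1}$ to the strict degree drop $D_{p-1}d_j < D$ — this is what yields "$< D$" rather than merely "$\le D$". No deeper obstacle is expected; the substitution interacts cleanly with $\ell$ precisely because $\ell_j$ is affine and so contributes no mixed monomials whose degrees could add up.
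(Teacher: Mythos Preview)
Your proof is correct and follows essentially the same argument as the paper: both identify, for each $j\notin J$, the largest index $i$ (your $p-1$) with $j_i<j$, use Claim~\ref{cla-ell} to conclude that $\ell_j$ depends only on $t_1,\dots,t_i$, and then combine the monotonicity $D_1<\cdots<D_k$ with the strict inequality $d_j<d_{j_i}$ to get the strict bound $D_i\,d_j<D$. The only cosmetic difference is that the paper phrases the bound as $\deg\bigl(\ell_j(t_1^{D_1},\dots,t_k^{D_k})\bigr)<D/d_j$ before multiplying through by $d_j$, whereas you bound the product $D_{p-1}d_j$ directly.
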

\begin{proof}
By definition
$$
h_i(t)=g_i(t_1^{D_1},\ldots,t_k^{D_k})=t_i^D + \sum_{j \in [n] \setminus J} c_{ij} \cdot \ell_j(t_1^{D_1},\ldots,t_k^{D_k})^{d_j}.
$$
To prove the claim we need to show that $\deg(\ell_j(t_1^{D_1},\ldots,t_k^{D_k}))<D/d_j$ for all $j \notin J$. If $j<j_1$ then $\ell_j$ is constant. Otherwise let $i \in [k]$ be maximal such that $j>j_i$. By Claim~\ref{cla-ell} we have that
$\ell_j(t)$ is an affine function of $t_1,\ldots,t_i$. Since $D_1< \ldots < D_k$ we have that
$$
\deg(\ell_j(t_1^{D_1},\ldots,t_k^{D_k})) \le D_{i} = \frac{D}{d_{j_i}} < \frac{D}{d_j}
$$
since $d_1>\ldots>d_n$.
\end{proof}

To complete the proof of Theorem~\ref{thm-explicit-efv} we need to show that $\V(h_1,\ldots,h_k)$ is finite. This follows from a general bound for polynomials of the form $h_i(t)=t_i^D + r_i(t)$ where $\deg(r_i)<D$.

\begin{lem}\label{lem-reduce}
Let $h_1,\ldots,h_k \in \bF[t_1,\ldots,t_k]$ be polynomials such that $h_i(t)=t_i^D + r_i(t)$ where $\deg(r_i)<D$. Then $\V(h_1,\ldots,h_k) \le D^k$.
\end{lem}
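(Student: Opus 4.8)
The plan is to prove Lemma~\ref{lem-reduce} by induction on $k$, using the leading-term structure $h_i(t) = t_i^D + r_i(t)$ with $\deg(r_i) < D$ to show that the variety $\V(h_1,\ldots,h_k)$ is finite; the size bound $D^k$ then follows immediately from Bezout's theorem (Theorem~\ref{thm-bezout}), since each $h_i$ has degree exactly $D$. So the real content is finiteness. The base case $k=1$ is clear: $h_1(t_1) = t_1^D + r_1(t_1)$ is a nonzero univariate polynomial (its top coefficient is $1$), hence has at most $D$ roots.

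For the inductive step, the idea I would pursue is that $\V(h_1,\ldots,h_k)$ cannot contain any positive-dimensional component, because such a component would have to project onto a positive-dimensional set in some coordinate, and the leading-term structure forbids this. Concretely, suppose for contradiction that $\V(h_1,\ldots,h_k)$ is infinite. Since $\bF$ is algebraically closed, an infinite variety has a component of dimension $\ge 1$, so there is an irreducible curve $C \subseteq \V(h_1,\ldots,h_k)$, which we may parametrize (at least generically, or after normalization) so that the coordinate functions $t_i$ become power series or rational functions in one parameter; equivalently, one can pass to a point of $C$ with coordinates in a field of transcendence degree $1$ over $\bF$. Now look at the $t_i$ with the "largest growth": among the coordinate functions restricted to $C$, pick $t_m$ that is non-constant (it exists since $C$ has dimension $1$ and at least one coordinate is non-constant on it). Then $t_m^D = -r_m(t)$, and since $\deg(r_m) < D$ while $t_m^D$ has degree exactly $D$ in $t_m$ alone, comparing the "orders" or "degrees" of both sides along $C$ yields a contradiction: the left side grows like $D$ times the order of $t_m$, but the right side, being a polynomial of total degree $< D$ in the $t_j$, grows strictly slower once we account for which $t_j$ dominates. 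Making this rigorous requires a clean notion of "growth rate" along $C$ — a discrete valuation — so I would instead phrase it via valuations: extend to a valuation $v$ on the function field of $C$ with $v(t_m) < 0$ minimal (i.e. $t_m$ a coordinate of largest "size"), then $v(t_m^D) = D\, v(t_m)$ while $v(r_m(t)) \ge \min_j (\deg_j\text{-monomial}) \cdot v(t_j) > D\, v(t_m)$ since every monomial of $r_m$ has total degree $< D$ and each $v(t_j) \ge v(t_m)$ (as $v(t_m)$ is the most negative); this contradicts $t_m^D + r_m(t) = 0$.

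I expect the main obstacle to be packaging this valuation/growth argument cleanly, since the paper has chosen to avoid heavy algebraic geometry. A more self-contained alternative, which I would actually prefer to write up, is an elimination/resultant argument: treat $h_k$ as a polynomial in $t_k$ with leading coefficient $1$ (it is monic of degree $D$ in $t_k$), and take iterated resultants $\mathrm{Res}_{t_k}(h_i, h_k)$ for $i < k$ to eliminate $t_k$; monicity guarantees the resultants are not identically zero and controls their degrees, so one obtains $k-1$ polynomials in $t_1,\ldots,t_{k-1}$ of the same leading-term type (after the substitution trick, possibly reapplying the $t_i \mapsto t_i^{D_i}$ normalization of Claim~\ref{cla-degR}) to which the induction hypothesis applies; combined with the at-most-$D$ fiber over each such point, one gets finiteness and then invokes Bezout for the clean $D^k$ bound. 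The delicate point in either approach is ensuring the "same form" is genuinely preserved under the reduction — that is the step I would be most careful about.
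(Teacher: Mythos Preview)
Your valuation argument is correct and does prove the lemma: if $\V(h_1,\ldots,h_k)$ contained an irreducible curve $C$, some $t_j$ would be non-constant on $C$ and hence have a pole at some place $v$ of the function field; taking $m$ with $v(t_m)$ minimal (so $v(t_m)<0$ and $v(t_i)\ge v(t_m)$ for all $i$), every monomial of $r_m$ has valuation at least $(\text{its total degree})\cdot v(t_m) > D\,v(t_m)$, contradicting $t_m^D=-r_m(t)$. Finiteness plus Bezout then gives $D^k$. This is a genuinely different route from the paper's. The paper does \emph{not} prove finiteness and then invoke Bezout; instead it bounds $|\V(h_1,\ldots,h_k)|$ directly by $\dim_{\bF}\bigl(\bF[t_1,\ldots,t_k]/\langle h_1,\ldots,h_k\rangle\bigr)$ (distinct common zeros give linearly independent ``indicator'' polynomials in the quotient), and then shows this quotient is spanned by the monomials $t_1^{e_1}\cdots t_k^{e_k}$ with each $e_i<D$, since the relation $h_i\equiv 0$ lets one replace $t_i^D$ by $-r_i(t)$ of strictly smaller total degree. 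That is essentially a Gr\"obner-basis reduction and is completely elementary---no curves, places, or Bezout---in line with the paper's stated aim of avoiding algebraic-geometry machinery. Your approach makes the obstruction (at infinity) conceptually transparent; the paper's buys a fully self-contained argument.

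Your second (resultant) approach has the gap you yourself flag, and it is a real one: after forming $\mathrm{Res}_{t_k}(h_i,h_k)$ there is no reason the resulting $(k-1)$ polynomials in $t_1,\ldots,t_{k-1}$ retain the shape ``$t_i^{D'}+{}$lower total degree'', so the induction hypothesis does not apply as stated. I would drop that line and either keep the valuation proof or adopt the quotient-ring argument.
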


Lemma~\ref{lem-reduce} follows immediately from the following two claims. In the following, let $R:=\bF[t_1,\ldots,t_k]$ be the ring of polynomials; $I:=\langle h_1,\ldots,h_k \rangle$ be the ideal in $R$ generated by $h_1,\ldots,h_k$; and $M:=R/I$ be their quotient. Note that $M$ is a vector space over $\bF$.

\begin{claim}
$
|\V(h_1,\ldots,h_k)| \le \dim M.
$
\end{claim}

\begin{proof}
Assume by contradiction there exist $w_1,\ldots,w_m \in \V(h_1,\ldots,h_k)$ where $m>\dim(M)$. Let $q_i \in \bF[t_1,\ldots,t_k]$ be polynomials
such that $q_i(w_i)=1$ and $q_i(w_j)=0$ for all $j \ne i$. Let $\tilde{q}_i$ be the image of $q_i$ in $M$. Since $m>\dim(M)$ there must exist a nonzero linear dependency among $\tilde{q}_1,\ldots,\tilde{q}_m$. That is, there exist $c_1,\ldots,c_m \in \bF$ not all zero such that
$$
\sum_{i \in [m]} c_i \cdot \tilde{q}_i(t) = 0 \quad (\textrm{in } M).
$$
Equivalently put,
$$
\sum_{i \in [m]} c_i \cdot q_i(t) \in I.
$$
The key observation is that for any polynomial $h(t) \in I$ we have that $h(w)=0$ for all $w \in \{w_1,\ldots,w_m\}$. This is because $h_i(w)=0$ for all $i \in [k]$ by assumption. Thus substituting $t=w_j$ we get that
$$
0=\sum_{i \in [m]} c_i \cdot q_i(w_j) = c_j,
$$
which contradicts the assumption that not all $c_1,\ldots,c_m$ are nonzero.
\end{proof}

\begin{claim}
$\dim M \le D^k$.
\end{claim}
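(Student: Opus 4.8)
The plan is to show that the monomials $\{t_1^{a_1}\cdots t_k^{a_k} : 0 \le a_i < D\}$ span $M = R/I$ over $\bF$. Since there are exactly $D^k$ such monomials, this immediately gives $\dim M \le D^k$. The key point is that in $M$ we have the relations $t_i^D \equiv -r_i(t_1,\ldots,t_k)$ with $\deg(r_i) < D$, so any high power of $t_i$ can be rewritten in terms of lower powers at the cost of increasing the degrees of the \emph{other} variables.

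First I would fix a polynomial $p \in R$ and argue by a suitable induction (say, on a carefully chosen monomial ordering, or on $\sum_i \lfloor a_i/D\rfloor$ with ties broken by total degree) that $p$ is congruent modulo $I$ to a linear combination of the reduced monomials. For any monomial $t^a = t_1^{a_1}\cdots t_k^{a_k}$ appearing in $p$ with some $a_i \ge D$, write $t^a = t_i^D \cdot t^{a - D e_i}$ and replace $t_i^D$ by $-r_i(t)$; since $\deg(r_i) < D$, this replaces $t^a$ by a polynomial whose monomials each have the $i$-th exponent reduced to at most $D-1$ but whose other exponents may grow. The subtle point — and the step I expect to be the main obstacle — is to set up the induction so that this rewriting process actually terminates: naively reducing one variable can blow up another, so one must pick a well-ordering (for instance, the lexicographic order on $(\lfloor a_1/D\rfloor + \cdots + \lfloor a_k/D\rfloor,\ \text{total degree})$, or better, reduce the variables one at a time starting from $t_1$) under which each substitution strictly decreases the rank. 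I would reduce with respect to $t_1$ first until no monomial has $t_1$-degree $\ge D$, then $t_2$, and so on; the point is that reducing $t_j$ never reintroduces powers of $t_1,\ldots,t_{j-1}$ beyond what was already there — wait, it can, so more care is needed, and the cleanest route is the monomial-order argument: each replacement of $t_i^D$ by $-r_i$ strictly lowers the leading monomial in the order where we compare first by $\max_i \lfloor a_i / D \rfloor$-type statistics, giving termination by well-foundedness.

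Once termination is established, the conclusion is immediate: every element of $R$ is congruent mod $I$ to a $\bF$-linear combination of the $D^k$ reduced monomials, so those monomials span $M$, hence $\dim M \le D^k$. Combined with the previous claim $|\V(h_1,\ldots,h_k)| \le \dim M$, this yields Lemma~\ref{lem-reduce}, and together with Claims~\ref{cla-ell}--\ref{cla-degR} and Corollary~\ref{cor-size-bound-bezout} completes the proof of Theorem~\ref{thm-explicit-efv}.

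A remark on the obstacle: the entire difficulty is bookkeeping to guarantee the rewriting halts, since the relations couple all the variables together. An alternative, slicker approach that sidesteps the termination worry is to invoke the theory of Gröbner bases or standard monomials directly — $h_i = t_i^D + r_i$ has leading term $t_i^D$ in any degree-refining order, and one can check $\{h_1,\ldots,h_k\}$ is a Gröbner basis (the $S$-polynomials reduce to zero because the leading terms are coprime), so the standard monomials are exactly those not divisible by any $t_i^D$, of which there are $D^k$. But since the paper emphasizes elementary methods, I would present the hands-on rewriting argument with the explicit well-order as above.
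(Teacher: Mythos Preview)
Your approach matches the paper's exactly: show that the $D^k$ monomials $t_1^{a_1}\cdots t_k^{a_k}$ with $0\le a_i<D$ span $M$ by iteratively rewriting any offending monomial via $t_i^D \equiv -r_i(t)$. But you are overthinking the termination step. The paper's observation is simply that each such substitution strictly decreases the \emph{total degree}: replacing a monomial $t^a$ with $a_i\ge D$ by $-r_i(t)\cdot t^{a-De_i}$ produces a polynomial of total degree at most $\deg(r_i)+|a|-D<|a|$, since $\deg(r_i)<D$. So plain induction on total degree does the job---no exotic well-order, no variable-by-variable scheme, and no worry that reducing one variable blows up another (it can increase individual exponents, but the sum still drops). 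Your Gr\"obner-basis alternative is correct and would also work, but the total-degree argument is the elementary route you were looking for and is exactly what the paper does.
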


\begin{proof}
We will show that $M$ is spanned by the image in $I$ of the monomials $t_1^{e_1} \ldots t_k^{e_k}$ where $0 \le e_1,\ldots,e_k \le D-1$. Thus in particular $\dim M \le D^k$. In order to do so, we need to show that if $q(t)$ is a polynomial then there exists a polynomial $\tilde{q}(t)$ such that $q-\tilde{q} \in I$ and the degree of each variable in $\tilde{q}$ is at most $D-1$. It suffices to show that if $q(t)$ has some variable of degree at least $D$ then we can find $\tilde{q}$ such that $q-\tilde{q} \in I$ and such that $\deg(\tilde{q})<\deg(q)$. The claim then follows by iterating this process until all variables have degrees below $D$. Moreover, it suffices to prove this in the case where $q$ is a monomial, as this process can be applied to each monomial individually.

Thus, let $q(t)=t_1^{e_1} \ldots t_k^{e_k}$ be a monomial where $e_i \ge D$ for some $i \in [k]$. Define
$$
\tilde{q}(t) = t_i^{e_i-D} (t_i^D-h_i(t)) \cdot \prod_{j \ne i} t_j^{e_j}.
$$
We have that $\deg(\tilde{q})<\deg(q)$ since $\deg(h_i(t)-t_i^D)<D$ by assumption; and $q(t)-\tilde{q}(t)=h_i(t) t_i^{e_i-D} \prod_{j \ne i} t_j^{e_j} \in I$ as required.
\end{proof}

\section{Subspace Evasive sets}\label{sec-evasive}
In this section we construct subspace evasive sets, based on the construction of everywhere-finite varieties given in Theorem~\ref{thm-explicit-efv}. We first recall the definition of subspace evasive sets.

\begin{define}[Subspace evasive sets]
Let $S \subset \F^n$. We say $S$ is $(k,c)$-subspace evasive if for all $k$-dimensional affine subspaces $H \subset \F^n$ we have $|S \cap H| \le c$.
\end{define}

We next give some necessary definitions. For polynomials $f_1,\ldots,f_k \in \F[x_1,\ldots,x_m]$ we define their common solutions in $\F^m$ (as opposed to their solutions over the algebraic closure) as
$$
\V_{\F}(f_1,\ldots,f_k) := \V(f_1,\ldots,f_k) \cap \F^m = \{x \in \F^m: f_1(x)=\ldots=f_k(x)=0\}.
$$
We say that a $k \times m$ matrix is {\em strongly-regular} if all its $r \times r$ minors are regular for all $1 \le r \le k$. For example, if $\F$ is a field with at least $m$ distinct nonzero elements $\gamma_1,\ldots,\gamma_m$ then $A_{i,j}=\gamma_j^i$ is strongly-regular.

\begin{thm}\label{thm-evasive}
Let $k \ge 1, \eps>0$ and $\F$ be a finite field. Let $m:=k/\eps$ and assume  $m$ is integer and $m$ divides $ n$. Let $A$ be a $k \times m$ matrix with coefficients in $\F$ which is strongly-regular. Let $d_1>\ldots>d_m$ be integers. For $i \in [k]$ let
$$
f_i(x_1,\ldots,x_m) := \sum_{j=1}^m A_{i,j} \cdot x_j^{d_j},
$$
and define $S \subset \F^n$ to be the $(n/m)$-times cartesian product of $\V_{\F}(f_1,\ldots,f_k) \subset \F^m$. That is
\begin{eqnarray*}
S &=& \V_{\F}(f_1,\ldots,f_k) \times \ldots \times \V_{\F}(f_1,\ldots,f_k)\\
&=& \{x \in \F^n: f_i(x_{tm+1},\ldots,x_{tm+m})=0,\;\forall\; 0 \le t < n/m,\;1 \le i \le k\}.
\end{eqnarray*}
Then $S$ is $(k,(d_1)^k)$-subspace evasive. Moreover,
\begin{enumerate}
\item If $\eps \le 1/10$, $d_1 \le |\F|^{1/4}$ and $|\F|^m \ge n^8$ then $|S| \ge \frac{1}{3}|\F|^{(1-\eps) n}$.
\item If at least $k$ of the degrees $d_1,\ldots,d_m$ are co-prime to $|\F|-1$ then $|S| = |\F|^{(1-\eps) n}$.
\end{enumerate}
\end{thm}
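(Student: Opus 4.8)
The plan is to break the statement into three parts: the subspace-evasiveness claim, the size lower bound in item 1, and the exact size in item 2. The evasiveness part should follow almost immediately from the work already done in Section~\ref{sec-equations}: given a $k$-dimensional affine subspace $H \subset \F^n$, I would observe that $S$ is cut out by $kn/m$ polynomials obtained by repeating $f_1,\dots,f_k$ on each of the $n/m$ blocks. Passing to the algebraic closure, the variety $\widehat S = \V(f_1,\dots,f_k)^{n/m} \subset \bF^n$ has the property that its restriction to any $k$-dimensional affine subspace $H$ is finite, because after projecting $H$ onto one block of $m$ coordinates for which the induced map $\F^k \to \F^m$ is still full rank (such a block exists since $H$ has dimension $k$ and the coordinates split into blocks, and one can choose the block where the $k$ ``pivot'' coordinates of $H$ — in the echelon form of Claim~\ref{cla-ell} — first become determined; more carefully, one argues that the fibers are finite by combining the everywhere-finite property of one block with the fact that the remaining coordinates are then affine functions of the first block's), one reduces to Theorem~\ref{thm-explicit-efv}. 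Then Bezout (Corollary~\ref{cor-size-bound-bezout}) gives $|S \cap H| \le |\widehat S \cap H| \le (d_1)^k$. The main subtlety here is cleanly justifying why the product variety is everywhere-finite; I expect to want a short lemma saying: if $V \subset \bF^m$ is everywhere-finite of ``codimension $k$'' and we take $W = V^\ell \subset \bF^{\ell m}$, then $W$ is everywhere-$k$-finite, by an inductive argument on $\ell$ peeling off one block at a time and using that a $k$-flat meets $W$ in a set fibered over its projection to the first block.

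For item 2, I would compute $|S| = |\V_\F(f_1,\dots,f_k)|^{n/m}$ and show $|\V_\F(f_1,\dots,f_k)| = |\F|^{m-k}$. The idea is that if at least $k$ of the degrees, say $d_{j_1},\dots,d_{j_k}$, are coprime to $|\F|-1$, then $x \mapsto x^{d_{j_i}}$ is a bijection on $\F$. Using the strong-regularity of $A$, I can (as in Claim~\ref{cla-linearf}) apply an invertible linear transformation to $f_1,\dots,f_k$ so that $\widetilde f_i = x_{j_i}^{d_{j_i}} + \sum_{j \notin J} c_{ij} x_j^{d_j}$; then for every assignment of the $m-k$ ``free'' variables $x_j$, $j \notin J$, the system $\widetilde f_i = 0$ determines $x_{j_i}^{d_{j_i}}$ uniquely, and by bijectivity of the power map determines $x_{j_i}$ uniquely. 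Hence there are exactly $|\F|^{m-k}$ solutions, giving $|S| = |\F|^{(m-k)n/m} = |\F|^{(1-\eps)n}$.

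For item 1, the power maps are no longer bijective, so I would instead show $|\V_\F(f_1,\dots,f_k)|$ is close to $|\F|^{m-k}$ on average, losing only a constant factor after taking the $(n/m)$-th power. The natural tool is character sums: write the indicator of $\widetilde f_i(x) = 0$ as $|\F|^{-1}\sum_{a \in \F} \psi(a \widetilde f_i(x))$ for an additive character $\psi$, so $|\V_\F| = |\F|^{-k} \sum_{a_1,\dots,a_k} \sum_x \psi\big(\sum_i a_i \widetilde f_i(x)\big)$. The $a = 0$ term contributes exactly $|\F|^{m-k}$. For the remaining terms the exponent is a nonzero diagonal polynomial $\sum_j b_j x_j^{d_j}$ (the $b_j$ are linear combos of the $a_i$'s, and by strong-regularity not all $b_j$ with $j \in J$ vanish when $a \ne 0$), and the character sum factors over coordinates into a product of one-dimensional sums $\sum_{x_j \in \F} \psi(b_j x_j^{d_j})$, each bounded by $(d_j - 1)\sqrt{|\F|} \le d_1 \sqrt{|\F|}$ by Weil's estimate when $b_j \ne 0$ and equal to $|\F|$ when $b_j = 0$. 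One checks at least one coordinate in $J$ has $b_j \ne 0$, so each bad term is at most $|\F|^{m-1} (d_1/\sqrt{|\F|})$; summing over the $|\F|^k - 1$ choices of $a$ and dividing by $|\F|^k$ gives an error of at most $|\F|^{m-k-1} d_1 \sqrt{|\F|} \le |\F|^{m-k} \cdot d_1 |\F|^{-1/2}$. Hence $|\V_\F| \ge |\F|^{m-k}(1 - d_1|\F|^{-1/2}) \ge |\F|^{m-k}(1 - |\F|^{-1/4})$ using $d_1 \le |\F|^{1/4}$. Raising to the power $n/m$ and using $(1-u)^N \ge 1 - Nu$ together with $|\F|^m \ge n^8$ (so that $N = n/m \le n \le |\F|^{m/8}$ and $Nu \le n |\F|^{-1/4}$ is small — here I would use $|\F|^{m} \ge n^8$ to bound $|\F|^{-1/4} \le n^{-8/4m} \cdot(\text{stuff})$; the precise bookkeeping needs $\eps \le 1/10$ to keep $m$ from being too large relative to the exponent) yields $|S| \ge \frac13 |\F|^{(1-\eps)n}$.

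The main obstacle I anticipate is item 1: getting the constant $\frac13$ to come out requires carefully tracking how the per-block multiplicative error $(1 - d_1|\F|^{-1/2})$ propagates through $n/m$ blocks, and the hypotheses $\eps \le 1/10$, $d_1 \le |\F|^{1/4}$, $|\F|^m \ge n^8$ are presumably calibrated exactly so that $(n/m)\cdot d_1 |\F|^{-1/2} \le 2/3$; I would isolate the Weil bound and the one-dimensional character-sum estimate as a preliminary lemma (this is presumably what Appendix~\ref{sec:preliminaries} provides) and then the argument is a clean if slightly fiddly chain of inequalities. The evasiveness reduction to Theorem~\ref{thm-explicit-efv} via the product construction is the second place to be careful, but it is more routine.
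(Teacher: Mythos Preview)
Your plan for item~2 matches the paper and is correct. The other two parts each have a real gap, and in both cases the missing ingredient is the \emph{strong}-regularity of $A$. For the evasiveness, your block-by-block induction is the right shape but does not close as written: projecting a $k$-flat $H$ onto the first block gives an $r'$-flat $T$ with $r'$ possibly strictly less than $k$ (a ``full-rank block'' need not exist), and the fibers $H_t$ are $(k-r')$-flats. To obtain the product bound $(d_1)^{r'}\cdot(d_1)^{k-r'}=(d_1)^k$ the induction must carry the refined statement $|\V_\F(f_1,\dots,f_k)\cap H'|\le(d_1)^r$ for \emph{every} $r\le k$, and this is exactly what strong-regularity buys: the first $r$ rows of $A$ form an $r$-regular matrix, so Theorem~\ref{thm-explicit-efv} applied with $r$ in place of $k$ shows $\V(f_1,\dots,f_r)\supset\V(f_1,\dots,f_k)$ is everywhere-finite for $r$-flats, and Bezout gives $(d_1)^r$. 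Invoking Corollary~\ref{cor-size-bound-bezout} directly on the product variety does not work either, since that corollary is for $k$ defining polynomials, not $kn/m$.

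For item~1, the character-sum setup is right but the estimate is too weak by a factor of roughly $|\F|^{k}$. With only ``at least one $b_j\ne 0$,'' each nonzero $a$ contributes $\bigl|\sum_x\psi(\cdot)\bigr|\le d_1|\F|^{m-1/2}$; summing over the $|\F|^k-1$ nonzero $a$'s and dividing by $|\F|^k$ leaves an error of order $d_1|\F|^{m-1/2}$, which swamps the main term $|\F|^{m-k}$ already for $k=1$. (Your displayed error $|\F|^{m-k-1}d_1\sqrt{|\F|}$ silently drops this factor.) The paper's fix is again strong-regularity: for nonzero $a\in\F^k$, any $k$ columns of $A$ are independent, so $a$ is orthogonal to at most $k-1$ columns and hence at least $m-k+1$ of the $b_j$ are nonzero, each contributing a normalized Weil factor $\le|\F|^{-1/4}$. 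This gives $|\widehat X(a)|\le|\F|^{-(m-k+1)/4}$; the hypotheses $\eps\le 1/10$ and $|\F|^m\ge n^8$ are calibrated so that $(m-k+1)/4\ge k+m/8$ and $|\F|^{m/8}\ge n$, yielding $|\V_\F|\ge(1-1/n)|\F|^{m-k}$ and $|S|\ge(1-1/n)^{n/m}|\F|^{(1-\eps)n}\ge\tfrac{1}{3}|\F|^{(1-\eps)n}$.
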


We prove Theorem~\ref{thm-evasive} in the remainder of this section. We first show that $\V_{\F}(f_1,\ldots,f_k)$ has small intersection with affine subspaces of dimension {\em at most} $k$ (this is a stronger statement than the one we proved in Section~\ref{sec-equations} since the dimension of the subspace can be smaller than $k$).

\begin{claim}\label{claim-small-intersection-upto-k}
Let $H \subset \F^m$ be an $r$-dimensional affine subspace for $r \le k$. Then $|\V_{\F}(f_1,\ldots,f_k) \cap H| \le (d_1)^r$.
\end{claim}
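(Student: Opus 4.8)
The plan is to reduce the case of an $r$-dimensional affine subspace with $r \le k$ to the full-dimensional case already handled in Section~\ref{sec-equations}, and to do this entirely within $\F^m$ rather than over the algebraic closure. First I would embed the $r$-dimensional affine subspace $H$ inside a $k$-dimensional affine subspace $H'$, with $H \subseteq H' \subseteq \F^m$. This only increases the intersection, so it suffices to bound $|\V_\F(f_1,\ldots,f_k) \cap H'|$ by $(d_1)^r$ — wait, that is not quite enough, since the full-dimensional bound only gives $(d_1)^k$. So instead the right move is to run the argument of Theorem~\ref{thm-explicit-efv} directly for the $r$-dimensional subspace and keep track of the sharper exponent.

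Concretely, I would present $H$ as the image of an injective affine map $\ell : \F^r \to \F^m$, and apply the echelon-form normalization of Claim~\ref{cla-ell} (with $k$ replaced by $r$) to obtain indices $j_1 < \cdots < j_r$ and an upper-triangular structure on the $\ell_j$. The key point is that strong-regularity of $A$ — as opposed to mere $k$-regularity — guarantees that the $r \times r$ minor of $A$ on columns $j_1,\ldots,j_r$ from the first $r$ rows $f_1,\ldots,f_r$ is invertible; this is exactly why the definition was upgraded. Hence, mimicking Claim~\ref{cla-linearf}, I can take invertible linear combinations $\tilde f_1,\ldots,\tilde f_r$ of $f_1,\ldots,f_r$ so that $\ip{u_i}{f(x)} = x_{j_i}^{d_{j_i}} + \sum_{j \notin J} c_{ij} x_j^{d_j}$ with $J = \{j_1,\ldots,j_r\}$. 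Restricting to $H$ via $\ell$, and applying the same monomial substitution $t_i \mapsto t_i^{D_i}$ with $D = \prod_{i=1}^r d_{j_i}$ and $D_i = D/d_{j_i}$, I get polynomials $h_1,\ldots,h_r$ of the form $h_i(t) = t_i^D + r_i(t)$ with $\deg(r_i) < D$ (the degree computation in Claim~\ref{cla-degR} goes through verbatim since $D_1 < \cdots < D_r$ and $d_1 > \cdots > d_m$). Then Lemma~\ref{lem-reduce} gives $|\V(h_1,\ldots,h_r)| \le D^r$, and Claim~\ref{cla-g-finite-if-h-finite} (plus the fact that the common zero set over $\F$ injects into the one over $\bF$) bounds $|\V_\F(f_1,\ldots,f_k) \cap H|$ by $D^r = \prod_{i=1}^r d_{j_i}^r \le (d_1)^{r\cdot r}$.

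That last bound is $(d_1)^{r^2}$, not $(d_1)^r$, so I have to be more careful — Bezout applied directly to the $g_i$ (before the monomial substitution) is the cleaner route. Since $\V_\F(f_1,\ldots,f_k) \cap H \subseteq \V(g_1,\ldots,g_r)$ where $g_i = \tilde f_i \circ \ell \in \bF[t_1,\ldots,t_r]$ has $\deg(g_i) \le d_{j_i} \le d_1$, and since the monomial-substitution argument shows $\V(g_1,\ldots,g_r)$ is finite, Bezout (Theorem~\ref{thm-bezout}) gives $|\V(g_1,\ldots,g_r)| \le \prod_{i=1}^r \deg(g_i) \le (d_1)^r$. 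So the proof is: (1) write $H$ as the image of an injective affine map on $\F^r$ and normalize via the $r$-variable analog of Claim~\ref{cla-ell}; (2) use strong-regularity to linearly combine $f_1,\ldots,f_r$ into the echelon form of Claim~\ref{cla-linearf}; (3) argue finiteness of $\V(g_1,\ldots,g_r)$ over $\bF$ by the monomial substitution and Lemma~\ref{lem-reduce}; (4) conclude with Bezout that $|\V_\F(f_1,\ldots,f_k)\cap H| \le |\V(g_1,\ldots,g_r)| \le \prod \deg(g_i) \le (d_1)^r$.

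The main obstacle I anticipate is step (2): making sure that strong-regularity (rather than $k$-regularity) is genuinely what is needed and suffices. The subtlety is that for an $r$-dimensional $H$ we only get to use the first $r$ polynomials $f_1,\ldots,f_r$ to kill the $r$ "pivot" monomials $x_{j_1}^{d_{j_1}},\ldots,x_{j_r}^{d_{j_r}}$, so I need the top-left-ish $r \times r$ submatrix of $A$ restricted to an arbitrary choice of $r$ columns to be invertible — which is precisely the strong-regularity hypothesis, and precisely where this claim differs from Section~\ref{sec-equations}. Everything else (the echelon normalization, the monomial substitution, the degree bookkeeping, and the final Bezout step) is a direct transcription of the arguments already given, with $k$ uniformly replaced by $r$.
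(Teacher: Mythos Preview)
Your final plan is correct and uses the same underlying mechanism as the paper. The paper's proof is much shorter, though: rather than re-running Claims~\ref{cla-ell}--\ref{cla-degR} and Lemma~\ref{lem-reduce} with $r$ in place of $k$, it simply observes that strong-regularity of $A$ means the submatrix formed by its first $r$ rows is $r$-regular, so Theorem~\ref{thm-explicit-efv} applies \emph{as a black box} (with $r$ in place of $k$ and $m$ in place of $n$) to give that $\V(f_1,\ldots,f_r)$ is everywhere-finite; Corollary~\ref{cor-size-bound-bezout} then yields $|\V(f_1,\ldots,f_r) \cap H| \le (d_1)^r$, and the claim follows from $\V_\F(f_1,\ldots,f_k) \cap H \subseteq \V(f_1,\ldots,f_r) \cap H$. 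The shortcut you are missing is that Theorem~\ref{thm-explicit-efv} is already stated for arbitrary $k \le n$, so you can invoke it at the $r$-level without opening it up; your steps (1)--(4) are exactly the unrolled contents of that invocation.
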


\begin{proof}
Note that $\V_{\F}(f_1,\ldots,f_k) \cap H = \V(f_1,\ldots,f_k) \cap H$ since $H \subset \F^m$. We will show that in fact
$|\V(f_1,\ldots,f_r) \cap H| \le (d_1)^r$, from which the claim will follow since $\V(f_1,\ldots,f_k) \subset \V(f_1,\ldots,f_r)$.
Now, since the matrix $A$ is strongly-regular, its restriction to the first $r$ rows is $r$-regular; hence $\V(f_1,\ldots,f_r)$ is everywhere-finite (as an $(n-r)$-dimensional variety) and, by Bezout's Theorem (Theorem~\ref{thm-bezout}), we have $|\V(f_1,\ldots,f_r) \cap H| \le (d_1)^r$.
\end{proof}

We now prove that $S=\V_{\F}(f_1,\ldots,f_k)^{(n/m)}$ is subspace evasive for dimensions up to $k$.

\begin{claim}
Let $H \subset \F^n$ be an $r$-dimensional affine subspace for $r \le k$. Then $|S \cap H| \le (d_1)^r$.
\end{claim}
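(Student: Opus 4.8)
The plan is to reduce the general $r$-dimensional case to the "full-dimensional on each block" analysis, using the product structure of $S$ together with Claim~\ref{claim-small-intersection-upto-k}. Write $n = m \cdot (n/m)$ and think of coordinates of $\F^n$ as grouped into $n/m$ consecutive blocks $B_0,\ldots,B_{n/m - 1}$, each of size $m$, so that $S = \prod_{t} \V_\F(f_1,\ldots,f_k)$ where the $t$-th factor lives on block $B_t$. For a point $x \in \F^n$ write $x = (x^{(0)},\ldots,x^{(n/m-1)})$ with $x^{(t)} \in \F^m$. The key elementary fact is that if $H \subset \F^n$ is an affine subspace of dimension $r$, then its projection $\pi_t(H)$ onto block $B_t$ is an affine subspace of $\F^m$ of dimension at most $r$ (projections of affine subspaces are affine and cannot increase dimension), and $H \subseteq \pi_0(H) \times \cdots \times \pi_{n/m-1}(H)$.

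First I would make that containment precise: for every $t$, if $x \in S \cap H$ then $x^{(t)} \in \V_\F(f_1,\ldots,f_k) \cap \pi_t(H)$, since membership in $S$ is decided block-by-block and membership in $H$ forces $x^{(t)} \in \pi_t(H)$. By Claim~\ref{claim-small-intersection-upto-k} applied with the $r_t$-dimensional subspace $\pi_t(H)$ (where $r_t := \dim \pi_t(H) \le r \le k$), we get $|\V_\F(f_1,\ldots,f_k) \cap \pi_t(H)| \le (d_1)^{r_t} \le (d_1)^r$ — but this naive product over all $n/m$ blocks gives a bound of $(d_1)^{r \cdot (n/m)}$, which is far too weak. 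So the real work is to exploit that the "freedom" of $H$ is only $r$-dimensional in total, not $r$-dimensional in each block.

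The fix is to choose the blocks in the right order relative to $H$, mirroring the echelon-form trick of Claim~\ref{cla-ell}. Put $H$ in upper-echelon form with respect to a parametrization $\ell : \F^r \to \F^n$: there are pivot coordinates, and at most $r$ of the blocks $B_t$ actually contain a pivot coordinate. For a block $B_t$ containing no pivot, all coordinates of $\ell$ restricted to $B_t$ are affine functions of the parameters $t_1,\ldots,t_s$ where $s$ is the number of pivots occurring strictly before block $B_t$; in particular, once we have fixed the values of $x^{(t')}$ for the earlier pivot-bearing blocks, the slice $x^{(t)}$ is determined. More carefully, I would process the blocks in coordinate order and track how many of the $r$ parameters have been "used up"; a block contributes a factor of $d_1$ to the count only when it introduces at least one fresh pivot, and the exponents telescope so that the total over all blocks is at most $(d_1)^r$. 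Concretely: enumerate $S \cap H$ by first choosing, for each pivot-bearing block in order, the slice $x^{(t)}$ among the $\le (d_1)^{(\text{number of new pivots in block } t)}$ possibilities compatible with the already-chosen earlier slices via Claim~\ref{claim-small-intersection-upto-k} applied to the appropriate affine subspace of $\F^m$; once all $r$ pivots are fixed, every remaining slice is uniquely determined, so contributes a factor of $1$. Multiplying, $|S \cap H| \le (d_1)^{\sum_t (\text{new pivots in block }t)} = (d_1)^r$.

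The main obstacle, and the step I would be most careful with, is making the "slice is determined by earlier slices" argument rigorous: I need that after conditioning on the earlier blocks' values, the set of compatible $x^{(t)}$ is contained in $\V_\F(f_1,\ldots,f_k)$ intersected with an affine subspace of $\F^m$ whose dimension equals the number of fresh pivots in block $B_t$, so that Claim~\ref{claim-small-intersection-upto-k} applies with exactly the right exponent. This amounts to checking that the echelon structure of $\ell$ interacts correctly with the block decomposition — essentially a bookkeeping argument identical in spirit to the proof of Claim~\ref{cla-degR}, where the ordering $d_1 > \cdots > d_m$ and the ordering of blocks both matter. Everything else (projections of affine spaces are affine, the product containment, distinctness of the enumerated tuples) is routine.
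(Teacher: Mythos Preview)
Your approach is correct and yields the same bound, but the paper's argument is considerably cleaner: it proceeds by induction on the number of blocks $b=n/m$. For $b=1$ the claim is exactly Claim~\ref{claim-small-intersection-upto-k}. For $b>1$, let $T\subset\F^m$ be the projection of $H$ onto the first block, of dimension $r'$, and for each $t\in T$ let $H_t$ be the fiber, an affine subspace of $\F^{n-m}$ of dimension $r-r'$. Then $|S\cap H|=\sum_{t\in \V_\F\cap T}|\V_\F^{b-1}\cap H_t|$, and one applies Claim~\ref{claim-small-intersection-upto-k} to $T$ and the induction hypothesis to each $H_t$ to get $(d_1)^{r'}\cdot(d_1)^{r-r'}=(d_1)^r$.

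The difference is organizational. You unroll the whole process at once via an echelon form of $\ell$ and count pivots per block; the paper peels off one block at a time and just uses the rank--nullity identity $\dim(\text{projection})+\dim(\text{fiber})=r$, which removes all need for echelon forms or pivot bookkeeping. Your version makes the ``dimension gets used up across blocks'' intuition very explicit, but at the cost of the extra structure you flagged as ``the step I would be most careful with.'' One small misdirection: the ordering $d_1>\cdots>d_m$ plays no role here --- Claim~\ref{claim-small-intersection-upto-k} is used as a black box, and all that matters is that projections and fibers of affine subspaces are affine of the right dimensions.
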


\begin{proof}
Let $\V_{\F}=\V_{\F}(f_1,\ldots,f_k)$. We prove the claim by induction of the number of blocks $b=n/m$. If $b=1$ then $S=\V_{\F}$ and the claim follows from Claim~\ref{claim-small-intersection-upto-k}. We thus assume that $b>1$. Decompose $H$ as a disjoint union of subspaces based on the restriction to the first $m$ coordinates $x_1,\ldots,x_m$ (i.e. the first block). That is, let $T:=\{(x_1,\ldots,x_m): (x_1,\ldots,x_n) \in H\}$ and for each $t \in T$ let $H_t:=\{(x_1,\ldots,x_n) \in H: (x_1,\ldots,x_m)=t\}$. Thus $H = \cup_{t \in T} H_t$ and we have that
$$
|\V_{\F}^{n/m} \cap H| = \sum_{t \in \V_{\F} \cap T} |\V_{\F}^{(n/m)-1} \cap H_t|.
$$
Now, since $H$ is an affine subspace so is $T$. Let $r'=dim(T)$ where $0 \le r' \le r$. We also have that $H_t$ is an $(r-r')$-dimensional affine subspace for all $t \in T$. Now by Claim~\ref{claim-small-intersection-upto-k} we have that
$|\V_{\F} \cap T| \le (d_1)^{r'}$; and by induction we have that $|\V_{\F}^{(n/m)-1} \cap H_t| \le (d_1)^{r-r'}$ for all $t \in T$. Hence $|\V_{\F}^{n/m} \cap H| \le (d_1)^r$ as claimed.
\end{proof}

We now turn to prove the `Moreover' part of Theorem~\ref{thm-evasive}, namely to lower bound the size of $S$. To do so, it is enough to bound the size of $\V_{\F}(f_1,\ldots,f_k)$ (since $S$ is a product of such sets). We begin with the unrestricted case, where all we assume are some (rather weak) bounds on the size of the field. We refer the reader to Appendix~\ref{sec:preliminaries} for the notations/preliminaries on Fourier analysis and Weil's theorem used in the proof.

\begin{claim}\label{claim-lower-bound-size-vf}
Assume that $\eps \le 1/10$, $d_1 \le |\F|^{1/4}$ and $|\F|^m \ge n^8$. Then $|\V_{\F}(f_1, \ldots,f_k)| \ge (1-1/n) |\F|^{m-k}$. In particular $|S| \ge |\F|^{(1-\eps)n} / 3$.
\end{claim}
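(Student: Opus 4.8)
The goal is to estimate $|\V_{\F}(f_1,\dots,f_k)|$, the number of $\F$-points on the variety cut out by $f_i(x) = \sum_{j=1}^m A_{i,j} x_j^{d_j}$, $i \in [k]$. The natural approach is the standard character-sum / Fourier-analytic counting: write the indicator that all $k$ equations vanish as an average over additive characters, so that
\begin{equation*}
|\V_{\F}(f_1,\dots,f_k)| = \frac{1}{|\F|^k} \sum_{a_1,\dots,a_k \in \F} \ \sum_{x \in \F^m} \psi\!\left( \sum_{i=1}^k a_i f_i(x) \right),
\end{equation*}
where $\psi$ is a fixed nontrivial additive character of $\F$. The $a=0$ term contributes exactly $|\F|^{m-k}$, the main term. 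For $a \neq 0$, note that $\sum_i a_i f_i(x) = \sum_{j=1}^m b_j x_j^{d_j}$ where $b_j = \sum_i a_i A_{i,j}$; since $A$ is strongly-regular (in particular, no $k \times k$ minor vanishes, and more relevantly here no column-combination can kill more than $k-1$ coordinates... — actually since the $k$ rows are linearly independent and every $r\times r$ minor is nonsingular, for $a\neq 0$ the vector $(b_1,\dots,b_m)$ has at least $m-k+1$ nonzero entries, hence certainly at least one). So the inner sum factors as $\prod_{j=1}^m \big( \sum_{x_j \in \F} \psi(b_j x_j^{d_j}) \big)$, and for each $j$ with $b_j \neq 0$ this is a Weil sum over a monomial of degree $d_j$, bounded in absolute value by $(d_j - 1)\sqrt{|\F|} \le d_1 \sqrt{|\F|}$ (using $\gcd(d_j, \text{something})$ bookkeeping, but the crude Weil bound suffices). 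For the $j$ with $b_j = 0$ the factor is exactly $|\F|$.

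**Key steps.** (1) Expand the point count as the character sum above and isolate the $a = 0$ main term $|\F|^{m-k}$. (2) For $a \neq 0$, factor the inner sum over coordinates and bound it: if $a \neq 0$ has support giving rise to $s \ge 1$ nonzero $b_j$'s, the product is at most $|\F|^{m-s} \cdot (d_1 \sqrt{|\F|})^{s} = |\F|^{m - s/2} d_1^{s} \le |\F|^{m-1/2} d_1^{m}$; since there are at most $|\F|^k$ choices of $a$, the total error is at most $|\F|^{k} \cdot |\F|^{m - 1/2} d_1^{m} / |\F|^k = |\F|^{m-1/2} d_1^m$ — wait, one must divide by $|\F|^k$ from the front, so the error is $\le |\F|^{-k} \cdot |\F|^k \cdot |\F|^{m-1/2} d_1^m = |\F|^{m-1/2} d_1^m$. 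To make this $\le \tfrac{1}{n} |\F|^{m-k}$ one needs $d_1^m \le \tfrac1n |\F|^{k - 1/2}$, which should follow from the hypotheses $d_1 \le |\F|^{1/4}$, $m = k/\eps$, $\eps \le 1/10$, and $|\F|^m \ge n^8$. (3) Conclude $|\V_\F| \ge (1 - 1/n)|\F|^{m-k}$, then raise to the $(n/m)$-th power: $|S| = |\V_\F|^{n/m} \ge (1-1/n)^{n/m} |\F|^{n - kn/m} = (1-1/n)^{n/m} |\F|^{(1-\eps)n} \ge \tfrac13 |\F|^{(1-\eps)n}$, the last inequality because $(1-1/n)^{n/m} \ge (1-1/n)^n \ge 1/3$ for $n$ not too small (and one handles small $n$ separately, or the hypothesis $|\F|^m \ge n^8$ forces $n$ large enough).

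**Main obstacle.** The genuinely delicate point is making the quantitative bookkeeping in step (2) actually close with the stated hypotheses: one needs to be careful that the Weil bound applies (the monomial map $x \mapsto x^{d_j}$ is handled via $\gcd(d_j, |\F|-1)$, but the crude estimate $|\sum_x \psi(bx^{d})| \le d\sqrt{|\F|}$ holds unconditionally for $b \neq 0$, $d \ge 1$), and that summing $(d_1\sqrt{|\F|})^s |\F|^{m-s}$ over all $|\F|^k$ values of $a$ — rather than naively bounding each term by the worst case — is organized so the $d_1^m$ and $|\F|^{-1/2}$ factors beat the $|\F|^k$ multiplicity and the $1/n$ target. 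A cleaner route is to bound the number of $a\neq 0$ with exactly $s$ nonzero $b_j$'s and sum the geometric-type series; but since the excerpt defers to Appendix~\ref{sec:preliminaries} for the precise Weil statement and notation, the cleanest presentation is to cite that appendix for the monomial Weil bound and then carry out the two-line arithmetic verifying $|\F|^{m - 1/2} d_1^m \le \tfrac1n |\F|^{m-k}$ from $d_1 \le |\F|^{1/4}$ and $|\F|^m \ge n^8$.
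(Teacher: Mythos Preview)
Your approach is exactly the paper's: character-sum expansion, factorization over coordinates, Weil on each monomial factor, and strong regularity to control how many $b_j$ vanish. The problem is that in your Key Steps you abandon the observation you correctly made in the Plan, and the arithmetic then fails.

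In the Plan you note that for $a \neq 0$ the vector $b = a^{\top} A$ has at least $m-k+1$ nonzero entries (any $k$ columns of $A$ form an invertible $k\times k$ matrix, so no nonzero $a$ is orthogonal to $k$ columns). But in step~(2) you write ``$s \ge 1$'' and bound $|\F|^{m-s/2}\, d_1^{s} \le |\F|^{m-1/2}\, d_1^{m}$, taking the worst case $s=1$ for the field factor and $s=m$ for the degree factor. The target inequality you then write down is false: requiring $|\F|^{m-1/2} d_1^{m} \le \tfrac{1}{n}\,|\F|^{m-k}$ rearranges to $d_1^{m} \le \tfrac{1}{n}\,|\F|^{1/2-k}$ (you wrote $|\F|^{k-1/2}$, a sign slip), and since $k \ge 1$ the right side is below $1$ while $d_1^{m} \ge 1$. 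So no choice of parameters saves this.

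The fix is simply to use $s \ge m-k+1$ throughout. With $d_1 \le |\F|^{1/4}$, each factor with $b_j\neq 0$ contributes at most $(d_j-1)\sqrt{|\F|}\le |\F|^{3/4}$, so in normalized form (as the paper writes it) $|\widehat{X}(a)| \le (|\F|^{-1/4})^{m-k+1}$. Now $\eps \le 1/10$ gives $m \ge 10k$, hence $(m-k+1)/4 \ge k + m/8$; combined with $|\F|^{m}\ge n^{8}$, i.e.\ $|\F|^{-m/8}\le 1/n$, this yields $|\widehat{X}(a)| \le (1/n)\,|\F|^{-k}$ for every $a\neq 0$, and summing over the $|\F|^{k}$ characters gives the claimed error $(1/n)\,|\F|^{-k}$ in $\Pr[X=0]$. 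No geometric series over $s$ or counting of $a$ by support size is needed.
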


\begin{proof}
Let $x=(x_1,\ldots,x_m) \in \F^m$ be chosen uniformly. Our goal is to estimate the probability that $\sum_{j=1}^m A_{i,j} \cdot x_j^{d_j}=0$ for all $i \in [k]$. Equivalently, let $X^{(j)} \in \F^k$ be a random variable defined as $X^{(j)}_i := A_{i,j} \cdot x_j^{d_j}$ and let $X:=X^{(1)}+\ldots+X^{(m)}$. We need to estimate the probability that $X=0^k$.

To this end, we apply Fourier analysis (for definitions see Appendix~\ref{sec:preliminaries}). Assume $\F=\F_{q}$ where $q=p^{\ell}$. The characters of $\F^k$ are given by $\chi_a(x)=\ep(\Tr(\ip{a}{x}))$ for $a \in \F^k$ where $\Tr:\F_q \to \F_p$ is the trace operator. Since $X^{(1)},\ldots,X^{(m)}$ are independent we have that
$$
\widehat{X}(a) = \E[\chi_a(X)] = \E[\chi_a(\sum_{j=1}^m X^{(j)})] = \prod_{j=1}^m \E[\chi_a(X^{(j)})]=\prod_{j=1}^m \widehat{X^{(j)}}(a).
$$
We proceed to estimate the Fourier coefficients of $X^{(j)}$. Let $A^{(j)} \in \F^k$ denote the $j$-th column of $A$. We have that
$$
\widehat{X^{(j)}}(a) = \E_{x_j \in \F}[\ep(\Tr(\ip{a}{A^{(j)}} \cdot x_j^{d_j}))].
$$
Thus, if the inner product of $a$ and $A^{(j)}$ is nonzero, we have by the Weil bound (Theorem~\ref{thm-weil}) that
$$
\left|\widehat{X^{(j)}}(a)\right| \le \frac{d_j-1}{\sqrt{|\F|}} \le |\F|^{-1/4}.
$$
Since we assume $A$ is strongly-regular, for any nonzero $a \in \F^m$ there could by at most $k-1$ columns of $A$ which are orthogonal to $a$; hence we deduce that for any nonzero $a$,
$$
\left|\widehat{X}(a)\right| \le (|\F|^{-1/4})^{(m-k+1)} \le |\F|^{-k-m/8} \le (1/n) \cdot |\F|^{-k}
$$
by our choice of parameters. We now apply these bounds to estimate the probability that $X=0$.
We have that $\Pr[X=0]=|\F|^{-k} \sum_{a} \widehat{X}(a)$ and $\widehat{X}(0)=1$; hence
$$
\left|\Pr[X=0]-\F^{-k}\right| \le \F^{-k} \sum_{a \ne 0} |\widehat{X}(a)| \le (1/n) \cdot |\F|^{-k}.
$$
Thus $|\V_{\F}(f_1,\ldots,f_k)| \ge (1-1/n) \F^{m-k}=(1-1/n)\F^{(1-\eps)m}$ and $|S|=|\V_{\F}(f_1,\ldots,f_k)|^{n/m} \ge (1/e) \cdot |\F|^{(1-\eps) n}$.
\end{proof}

We now prove the second item of the `Moreover' part of Theorem~\ref{thm-evasive} in which we assume that at least $k$ of the degrees $d_1,\ldots,d_n$ are co-prime to $|\F|-1$ and use this extra condition to obtain a precise quantity for the size of $S$. In Section~\ref{sec-explicit} we show that this condition on the degrees is relatively easy to satisfy if we are the ones choosing the field $\F$.

\begin{claim}\label{claim-size-bound-k-coprime}
If at least $k$ of the degrees $d_1,\ldots,d_m$ are co-prime to $|\F^*|=|\F|-1$ then
$|\V_{\F}(f_1,\ldots,f_k)| = |\F|^{m-k}$. This implies $|S| = |\F|^{(1-\eps)n}$.
\end{claim}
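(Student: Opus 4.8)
The plan is to exploit the elementary fact that when $\gcd(d,|\F|-1)=1$ the power map $x\mapsto x^{d}$ is a bijection of $\F$ onto itself: it fixes $0$ and, since $\F^{*}$ is a cyclic group of order $|\F|-1$, raising to the $d$-th power permutes $\F^{*}$. This is exactly where finiteness of $\F$ is used, and this is the only nontrivial input beyond the strong-regularity of $A$.

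First I would fix a set $C\subseteq[m]$ of $k$ indices $j$ for which $d_{j}$ is co-prime to $|\F|-1$; such a set exists by hypothesis. I would then count the solutions $x\in\F^{m}$ of $\{f_{i}(x)=0:i\in[k]\}$ by first choosing the coordinates $x_{j}$ for $j\notin C$ arbitrarily — there are $|\F|^{m-k}$ such partial assignments — and counting, for each fixed choice, the number of completions $(x_{j})_{j\in C}$.

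For a fixed partial assignment the system becomes, for each $i\in[k]$,
$$\sum_{j\in C}A_{i,j}\,x_{j}^{d_{j}}=b_{i},\qquad b_{i}:=-\sum_{j\notin C}A_{i,j}\,x_{j}^{d_{j}}.$$
Substituting $y_{j}:=x_{j}^{d_{j}}$ for $j\in C$ — a bijection of $\F$ for each such $j$ — the number of solutions $(x_{j})_{j\in C}\in\F^{k}$ equals the number of solutions $(y_{j})_{j\in C}\in\F^{k}$ of the linear system whose coefficient matrix is the $k\times k$ submatrix $A_{C}$ of $A$ on the columns indexed by $C$. Since $A$ is strongly-regular, $A_{C}$ is invertible, so this linear system has exactly one solution. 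Hence every one of the $|\F|^{m-k}$ partial assignments extends in exactly one way, giving $|\V_{\F}(f_{1},\ldots,f_{k})|=|\F|^{m-k}$.

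Finally, since $S$ is the $(n/m)$-fold Cartesian power of $\V_{\F}(f_{1},\ldots,f_{k})$ and $k/m=\eps$ (as $m=k/\eps$), we conclude $|S|=\bigl(|\F|^{m-k}\bigr)^{n/m}=|\F|^{(1-\eps)n}$. I do not anticipate any real obstacle; the one point to state carefully is why freeing the $m-k$ coordinates outside $C$ first is the right move — it reduces the remaining problem to an honest square linear system in which the bijective substitution applies to every surviving variable.
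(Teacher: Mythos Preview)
Your proof is correct and is essentially identical to the paper's own argument: fix the $m-k$ coordinates outside the chosen index set, reduce the remaining system to $A_{C}y=b$ via the bijective substitution $y_{j}=x_{j}^{d_{j}}$, and invoke strong-regularity of $A$ for the unique $y$. The only cosmetic difference is notation (the paper calls your set $C$ by $J$) and that the paper phrases the bijectivity as ``raising to the $d_{j_i}$ power is an automorphism of $\F^{*}$'' rather than your equivalent formulation.
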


\begin{proof}
Let $\V_{\F} = \V_{\F}(f_1,\ldots,f_k)$. Let $d_{j_1},\ldots,d_{j_k}$ be degrees among $d_1,\ldots,d_m$ co-prime to $|\F|-1$ and let $J=\{j_1,\ldots,j_k\}$. We will show that for any setting of $\{x_j: j \notin J\}$ there exists a unique setting of $\{x_j: j \in J\}$ which makes $x \in \V_{\F}$. This will clearly show that $|\V_{\F}|=|\F|^{m-k}$ as claimed.

Substitute $x_j=c_j \in \F$ for all $j \notin J$. We have that $x \in \V_{\F}$ if
$$
\sum_{j \in J} A_{i,j} \cdot x_{j}^{d_j} = -\sum_{j \notin J} A_{i,j} \cdot c_j^{d_j} \qquad \forall i \in [k].
$$
Let $A'$ be the $k \times k$ minor of $A$ given by restricting $A$ to columns in $J$. Let $y=(x_{j_1}^{d_{j_1}},\ldots,x_{j_k}^{d_{j_k}})$ and let $b \in \F^k$ be given by $b_i = -\sum_{j \notin J} A_{i,j} \cdot c_j^{d_j}$. Then $x \in \V_{\F}$ if
$$
A' y = b.
$$
We have that $A'$ is regular since $A$ is strongly regular; hence there exists a unique solution $y \in \F^k$ for the linear system $A' y = b$. We now apply our assumption that each degree $d_{j_i}$ is co-prime to $|\F^*|=|\F|-1$. This implies that raising to the $d_{j_i}$ power in $\F$ is an automorphism of $\F^*$. That is, for each $y_i$ there exists a unique solution to $x_{j_i}^{d_{j_i}}=y_i$ where $x_{j_i} \in \F$.
\end{proof}

\section{Explicitness of the construction}\label{sec-explicit}
In this section we  discuss the explicitness of our construction of  subspace evasive sets. The construction of everywhere-finite varieties accomplished in Theorem~\ref{thm-explicit-efv} is given as the zero set of explicitly defined polynomials.  One can use our construction over any finite field, including $\F=\F_{2^{\ell}}$ which is convenient for applications. The construction requires  an explicit  strongly regular $k \times n$ matrix  $A$. Such a matrix can be easily obtained when $|\F| > n$ by taking $A_{i,j} = \gamma_j^i$ where $\gamma_1,\ldots,\gamma_n \in \F$ are $n$ nonzero distinct elements in $\F$ (this is because each $k \times k$ sub-matrix is a Vandermonde matrix).

\subsection{Efficient encoding of vectors as elements of $S$}
It is trivial to decide in polynomial time if a given point $x \in \F^n$ is in $S$ or not. The first non-trivial issue regarding explicitness is how to sample an element of the set uniformly. More precisely, for an evasive set $S \subset \F^n$ of size $|\F|^r$ we would like to have an efficiently computable bijection $\varphi : \F^r \mapsto S$. This is needed for the list-decoding application (see Section~\ref{sec-listdec}) because we would like to encode messages as strings in $S$ without losing much in the rate of the code and so that we can efficiently recover the original messages from their representation as elements of $S$. We now show how one can sample from the variety $\V_{\F}(f_1,\ldots,f_k)$ efficiently (this is enough since the construction of evasive sets is a Cartesian product of such sets). We show this is simple when at least $k$ of the degrees $d_1,\ldots,d_m$ are co-prime to $|\F|-1$ (we will show below that this condition is easy to obtain).

\begin{claim}\label{cla:sample}
Assume that at least $k$ of the degrees $d_1,\ldots,d_m$ are co-prime to $|\F|-1$. Then there is an easy to compute bijection $\varphi:\F^{m-k} \to \V_{\F}(f_1,\ldots,f_k) \subset \F^m$. Moreover, there are $m-k$ coordinates in the output of $\varphi$ that compute the identity mapping $Id : \F^{m-k} \mapsto \F^{m-k}$.
\end{claim}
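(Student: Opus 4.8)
The plan is to make explicit the bijection that is implicit in the proof of Claim~\ref{claim-size-bound-k-coprime}. Let $d_{j_1},\ldots,d_{j_k}$ be $k$ of the degrees that are co-prime to $|\F|-1$, and set $J=\{j_1,\ldots,j_k\}$. The free coordinates will be exactly $\{x_j : j \notin J\}$, which is an index set of size $m-k$; this is where the ``identity'' part of the statement comes from. So I would define $\varphi$ on an input $c \in \F^{m-k}$, thought of as an assignment $(c_j)_{j \notin J}$, by first copying $c_j$ into the output coordinate $x_j$ for each $j \notin J$ (these are the $m-k$ identity coordinates), and then computing the remaining $k$ coordinates $\{x_j : j \in J\}$ as the proof of Claim~\ref{claim-size-bound-k-coprime} prescribes.

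The key computational steps, in order, are: (1) form the vector $b \in \F^k$ with $b_i = -\sum_{j \notin J} A_{i,j} c_j^{d_j}$, which costs $O(mk)$ field operations plus the cost of the exponentiations $c_j^{d_j}$, each done by repeated squaring in $O(\log|\F|)$ field operations; (2) solve the linear system $A' y = b$, where $A'$ is the $k\times k$ minor of $A$ on the columns in $J$, using Gaussian elimination in $O(k^3)$ field operations — this is well-defined and has a unique solution because $A$ is strongly-regular, hence $A'$ is regular; (3) for each $i \in [k]$, recover $x_{j_i}$ from $x_{j_i}^{d_{j_i}} = y_i$. For step (3), since $\gcd(d_{j_i}, |\F|-1) = 1$, let $e_i$ be the inverse of $d_{j_i}$ modulo $|\F|-1$ (computable once, via the extended Euclidean algorithm); then raising to the $e_i$ power inverts raising to the $d_{j_i}$ power on $\F^*$, and one checks the $x_{j_i} = 0$ case separately (it corresponds exactly to $y_i = 0$). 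Thus $x_{j_i} := y_i^{e_i}$ is the unique preimage, again computed by repeated squaring. All of this is polynomial in $m$ and $\log|\F|$.

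To finish, I would argue that $\varphi$ is a bijection onto $\V_\F(f_1,\ldots,f_k)$. Surjectivity: any $x \in \V_\F$ restricts to some assignment $(x_j)_{j\notin J}$, and the proof of Claim~\ref{claim-size-bound-k-coprime} shows the remaining coordinates are then forced to the values $\varphi$ produces, so $x = \varphi((x_j)_{j \notin J})$. Injectivity: the output coordinates indexed by $[n]\setminus J$ already recover the input, so $\varphi$ is injective (indeed left-invertible by projection). That the image lies in $\V_\F$ is immediate from the construction, since $y = ((x_{j_i})^{d_{j_i}})_i$ was chosen precisely to satisfy $A'y = b$, i.e. $f_i(x)=0$ for all $i$. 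The inverse map $\varphi^{-1}$ is just coordinate projection onto $[n]\setminus J$, which is trivially efficient.

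I do not expect a genuine obstacle here — the claim is essentially a bookkeeping/algorithmic restatement of Claim~\ref{claim-size-bound-k-coprime}. The only mild subtlety to handle carefully is the $x_{j_i}=0$ versus $x_{j_i}\in\F^*$ dichotomy in step (3): one must note that the map $z \mapsto z^{d}$ on all of $\F$ (not just $\F^*$) is still a bijection when $\gcd(d,|\F|-1)=1$, because $0$ maps to $0$ and the restriction to $\F^*$ is a bijection of $\F^*$, so the formula $z \mapsto z^{e}$ with $de \equiv 1 \pmod{|\F|-1}$ correctly inverts it on $0$ as well (since $0^e = 0$ for $e \ge 1$). Everything else is routine.
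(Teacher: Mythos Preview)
Your proposal is correct and follows essentially the same approach as the paper's own proof: identify the free coordinates $\{x_j : j\notin J\}$, compute $b$, solve $A'y=b$, and invert the power maps via exponents $e_i$ with $d_{j_i}e_i\equiv 1\pmod{|\F|-1}$. The only (harmless) slip is that you write $[n]\setminus J$ in two places where you mean $[m]\setminus J$; otherwise your argument is just a more detailed version of the paper's, including the explicit treatment of the $0$ case and the bijection verification that the paper leaves implicit.
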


\begin{proof}
The proof is similar to the proof of Claim~\ref{claim-size-bound-k-coprime}. Let $d_{j_1},\ldots,d_{j_k}$ be degrees among $d_1,\ldots,d_m$ co-prime to $|\F|-1$ and let $J=\{j_1,\ldots,j_k\}$. We showed in Claim~\ref{claim-size-bound-k-coprime} that for any setting for $\{x_j: j \notin J\}$ there exists a unique setting of $\{x_j: j \in J\}$ which makes $x \in \V_{\F}(f_1,\ldots,f_k)$.
We now show that given this setting, the values of $\{x_j: j \in J\}$ can be found efficiently. Thus taking $\varphi$ to be the identity map from $\F^{m-k}$ to $\F^{[m]\setminus J}$ and completing it uniquely to $\F^m$ will give the required map.
As we showed in Claim~\ref{claim-size-bound-k-coprime} we have $x_{j_i}^{d_{j_i}}=y_j$ where $y=(y_1,\ldots,y_m)$ is a unique solution to a linear system $A' y=b$, where $A',b$ are easy to compute and $A'$ is regular. The value of $y$ can be found by solving a linear system; and the value of $x_{j_i}$ can be retrieved since $x_{j_i}=y_j^{e_j}$ where $e_j$ is the inverse of $d_{j_i}$ modulo $|\F|-1$ (which exists by assumption).
\end{proof}

\subsection{Computing the intersection with a given subspace}
Another important explicitness issue is how to efficiently compute the intersection of a $(k,c)$-subspace evasive set $S \subset \F^n$ with a given affine subspace $H$ of dimension $k$. This question comes up in the list-decoding application when we obtain a subspace (given in some basis) that is supposed to contain all possible decodings of a corrupted codeword and we wish to `filter-out' this subspace to obtain the list of elements in it that are also in $S$. One way of doing this is to go over all elements in $H$ and to check for each whether or not it is in $S$ (in our case by evaluating the $k$ polynomials and checking that they are all zero). Using the specific structure of our construction we can do much better and output the set $S \cap H$ in polynomial time in the size of the intersection.

\begin{claim}\label{cla:compute-intersect}
Let $S \subset \F^n$ be the $(k,c)$-subspace evasive set constructed in Theorem~\ref{thm-evasive} (for some choice of the parameter $m$ and degrees $d_1> \ldots > d_m$). There exists an algorithm  that, given a basis for any affine subspace $H$ of dimension $k$, outputs $S \cap H$ in time polynomial in the output size.
\end{claim}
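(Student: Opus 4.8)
The plan is to exploit the two structural features of the construction: $S$ is a Cartesian product over $n/m$ disjoint blocks of $m$ coordinates each, and within a single block the relevant intersections are small and can be found by solving a zero-dimensional polynomial system of low degree. First I would organize the computation as a recursive, block-by-block enumeration, in the same spirit as the proof that $S$ is evasive. For $t=0,1,\ldots,n/m-1$ maintain a list $L_t$ of partial solutions: points $p\in\F^{(t+1)m}$ that arise as the projection onto the first $t+1$ blocks of some point of $H$ and on which $f_1,\ldots,f_k$ vanish block-wise; alongside each $p$ store the affine subspace $H_p:=\{x\in H:(x_1,\ldots,x_{(t+1)m})=p\}$. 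Starting from $L_{-1}=\{()\}$, $H_{()}=H$, the step from $t$ to $t+1$ is: for each $p\in L_t$ let $T_p\subset\F^m$ be the projection of $H_p$ onto the coordinates of block $t+1$; if $\dim T_p=0$ test whether the single point of $T_p$ lies in $\V_{\F}(f_1,\ldots,f_k)$, otherwise compute $V_p:=\V_{\F}(f_1,\ldots,f_k)\cap T_p$ by the core subroutine below; then, for each $v\in V_p$, append the pair $(p,v)$ to $L_{t+1}$ together with $H_{(p,v)}$. After the last block $L_{n/m-1}=S\cap H$, which is the output. All the auxiliary linear algebra (computing $T_p$, $H_p$ and their bases and dimensions from a basis of $H$) is routine.

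Next comes the core subroutine: given an affine subspace $T\subset\F^m$ of dimension $r$ with $1\le r\le k$, output $\V_{\F}(f_1,\ldots,f_k)\cap T$. Parametrize $T$ by an injective affine map $\ell:\F^r\to\F^m$ and put $g_i:=f_i\circ\ell\in\F[t_1,\ldots,t_r]$ for $i\in[k]$. Since $A$ is strongly-regular, its first $r$ rows are $r$-regular, so Theorem~\ref{thm-explicit-efv} (applied with $n$ replaced by $m$) tells us that $\V(f_1,\ldots,f_r)$ is everywhere-finite; hence $\V(g_1,\ldots,g_r)=\V(f_1,\ldots,f_r)\cap T$ is a finite subset of $\bF^r$ of size at most $(d_1)^r$. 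In particular $g_1,\ldots,g_r$ is a complete intersection and the ideal $\langle g_1,\ldots,g_r\rangle$ is zero-dimensional of degree at most $(d_1)^r$, so one can apply any standard solver for zero-dimensional systems: compute a lexicographic Gr\"{o}bner basis, which yields a univariate polynomial in $t_1$ of degree at most $(d_1)^r$ generating the elimination ideal, factor it over $\F=\F_q$ by univariate root-finding (e.g.\ Cantor--Zassenhaus), and back-substitute to recover all $\F$-rational points of $\V(g_1,\ldots,g_r)$; finally discard those that fail $g_{r+1}=\cdots=g_k=0$. Each of these operations runs in time polynomial in $(d_1)^k$, $m$ and $\log|\F|$. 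If one prefers to stay elementary, iterated resultants give an elimination procedure with a weaker degree bound that is still adequate in the parameter regime of interest, where $k$ and $d_1$ are small.

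Finally I would account for the total cost. A short potential argument, mirroring the proof that $S$ is evasive --- the dimension of $H_p$ drops by exactly $\dim T_p$ at each step and $|V_p|\le(d_1)^{\dim T_p}$ --- shows that $\sum_{p\in L_t}(d_1)^{\dim H_p}$ is non-increasing in $t$, so $|L_t|\le(d_1)^k=c$ for every $t$. Thus the algorithm makes at most $c\cdot(n/m)$ calls to the core subroutine, each on an affine subspace of $\F^m$ of dimension at most $k$, and outputs a set of size $|S\cap H|\le c$; the overall running time is polynomial in $c$, $n$ and $\log|\F|$, hence polynomial in (the worst-case) output size. The part needing the most care is the core subroutine: passing rigorously from ``$\V(f_1,\ldots,f_r)\cap T$ is finite'' to a concretely solvable triangular system and bounding the cost of that elimination. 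The product-over-blocks structure is exactly what keeps every subproblem's dimension at most $k$, independent of $n$; without it the reduction to small polynomial systems would collapse.
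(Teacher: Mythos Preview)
Your proposal is correct and follows essentially the same approach as the paper: a block-by-block branching enumeration that, within each block, reduces to solving a zero-dimensional polynomial system in at most $k$ variables. The paper simply cites off-the-shelf zero-dimensional solvers \cite{Laz92,FGLM93} where you spell out the Gr\"obner-basis/root-finding route, and your potential argument for $|L_t|\le(d_1)^k$ is a cleaner version of the paper's informal $r_1+\cdots+r_\ell=k$ count, but the structure is the same.
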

\begin{proof}
	This follows from powerful algorithms that can solve a system of polynomial equations (over finite fields) in time polynomial in the size of the output, {\em provided that the number of solutions is finite in the algebraic closure} (i.e the `zero-dimensional' case). See for example \cite{Laz92,FGLM93}. In our basic construction of an everywhere finite variety, given as the common zero set of $k$ polynomials $f_1,\ldots,f_k$ in $n$ variables $x_1,\ldots,x_n$, the intersection with a $k$ dimensional affine subspace reduces to solving a system of  $k$ equations in $k$ variables -- simply substitute $x_i = \ell_i(t_1,\ldots,t_k)$, where $H$ is the image of the degree one map $\ell : \F^k \mapsto \F^n$. For the construction of the evasive set (which is the direct product of these simple varieties) we  can use an iterative argument (similar to the proof of Theorem~\ref{thm-evasive}). Recall that in our construction we partitioned the set of coordinates into consecutive blocks of length $m$ -- each containing an independent copy of a the variety $\V(f_1,\ldots,f_k)$.  In the first step we solve a system of equations for the projection of $H$ on the first $m$ coordinates. If the dimension of this projection is $r_1$ then this step will take time polynomial in $(d_1)^{r_1}$ which is the bound on the number of solutions. For every fixing of the first $m$ coordinates to a solution obtained in this step, we reduce the dimension of $H$ by $r_1$ and obtain a new subspace $H'$ on the remaining coordinates. Continuing in the same fashion with $H'$ on the second block we can compute all solutions in time $\poly((d_1)^{r_1})\cdot \poly((d_1)^{r_2})\cdot \cdots \cdot \poly((d_1)^{r_\ell})$, where $r_1+ r_2 + \ldots + r_\ell = k$. This will add up to a total of $\poly((d_1)^k)$ running time, which is polynomial in the number of solutions.
\end{proof}

\subsection{Generating a field with $k$ degrees co-prime to $|\F|-1$} 
\label{sub:finding_m_degrees_co_prime_to_f_1_}


We now address the condition, appearing  in Theorem~\ref{thm-evasive} and in Claim~\ref{cla:sample}, that at least $k$ of the degrees $d_1>\ldots > d_m$ used in the construction are co-prime to $|\F|-1$. We will want to satisfy this condition, while still maintaining am reasonable bound on $d_1$  (which is important since it determines the intersection size with subspaces). For certain fields it may be the case that $|\F|-1$ has many small divisors, in which case $d_1$ might have to be large. However, if one has the freedom of `picking' the field size (as we do in the application to list-decoding) then this problem can be avoided. In essence, we need a (deterministic) way of generating a field $\F$ of size within some specified range and with at least $k$ small integers co-prime to $|\F|-1$. The best bound on the $k$ integers is $O(k)$ which can be obtained, for example, using `safe' primes or primes of the form $2q+1$ for $q$ prime. Since we do not know how to find a safe prime in a specified range (or even to show that infinitely many such primes exist!) we will have to resort to an asymptotically weaker bound as is given by the following claim.

\begin{claim}\label{cla-field}
There exists a constant $C>0$ such that the following holds: There is a deterministic algorithm that, given integer inputs $k,n$ so that $n > k^{C\log\log k}$,  runs in $\poly(n)$ time and returns a prime $p$ and $k$ integers $k^{C\log\log k}> d_1 > d_2 > \ldots > d_k > 1$ such that:
\begin{enumerate}
	\item For all $i \in [k]$, $\gcd(p-1,d_i)=1$.
	\item $n < p \leq n\cdot k^{C\log\log k}$.
\end{enumerate}
\end{claim}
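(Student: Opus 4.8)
The plan is to build the prime $p$ and the degrees $d_1,\dots,d_k$ by a greedy sieve combined with a counting argument: we want $k$ small integers that are simultaneously coprime to $p-1$, so it is natural to first fix a small product $P$ of distinct primes (which will be the set of ``forbidden'' prime factors of the $d_i$'s), then search for a prime $p$ in the target range with $p \equiv 1 \pmod{P}$ — equivalently $p-1$ divisible by $P$ but, more importantly, a $p$ for which the small primes we intend to use as building blocks of the $d_i$'s do not divide $p-1$. Concretely, let $q_1 < q_2 < \dots$ be the primes, and let $B = k^{c_0\log\log k}$ be a bound we will tune. The number of integers in $[2,B]$ that are coprime to a given $m$ is, by inclusion–exclusion, at least $B\prod_{q \mid m}(1-1/q)$; using Mertens' theorem this is $\Omega(B/\log\log m)$, so as long as $B \gg k\log\log k$ — say $B = k^{C\log\log k}$ for suitable $C$ — there are far more than $k$ integers in $[2,B]$ coprime to $m$, for {\em any} $m \le n < p \le nB$. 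Hence the crux is not ``are there enough $d_i$'' but rather ``can we {\em find} such a $p$ deterministically in $\poly(n)$ time.''

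First I would handle the search for $p$. Set $L := \prod_{q \le T} q$ for a threshold $T = \Theta(\log k \cdot \log\log k)$ (so $L$ is roughly $k^{O(\log\log k)}$ by the prime number theorem, hence $\le \poly(n)$ under the hypothesis $n > k^{C\log\log k}$). We seek a prime $p$ with $n < p \le nB$ and $\gcd(p-1, L') = 1$ for a specific $L' \mid L$ of our choosing — the reason we want this is that it guarantees {\em all} primes $q \le T$ are coprime to $p-1$, so any product of such small primes, or any such prime itself, is a legitimate $d_i$. By the prime-counting results usable for deterministic search (e.g.\ iterating $p$ over the $\poly(n)$-length interval $(n, nB]$ and primality-testing each candidate with AKS, which is $\poly(\log n) = \poly(\log n)$ per test), we simply test every integer in the interval; the interval has length $n(B-1) = \poly(n)$, so the whole search is $\poly(n)$. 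For correctness we need at least one prime in $(n,nB]$ with $p-1$ coprime to $L'$: this follows because primes $p \equiv 1 \pmod{L'}$ are {\em ruled out}, not required — we want $p$ with $p \not\equiv 1$ modulo each $q \le T$, a union of $O(T/\log T)$ ``good'' residue classes mod $q$ for each such $q$, and a standard application of the Chinese Remainder Theorem together with an effective Linnik-type or Bombieri–Vinogradov-style bound (or, sufficient for us, just the prime number theorem for arithmetic progressions with small modulus $L' \le k^{O(\log\log k)} \le n^{o(1)}$) shows the count of such primes in $(n, nB]$ is $\Omega\!\big(\tfrac{nB}{\log n}\cdot \tfrac{\phi(L')}{L'}\big) \gg 1$, once $B$ is polynomially larger than $\log n$ and $L'$ is not too large. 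I would pick $L'$ to be the product of the primes $q \le T$, and verify $\phi(L')/L' = \prod_{q \le T}(1-1/q) = \Theta(1/\log T) = \Theta(1/\log\log k)$, which is comfortably bounded below by an inverse polynomial in $k$, so the main term dominates.

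Then, given such a $p$, I would choose the $d_i$'s. Every prime $q$ with $q \le T$ is coprime to $p-1$ by construction; I need $k$ distinct integers $> 1$ and $< B$ each coprime to $p-1$, and I would simply take $d_1 > d_2 > \dots > d_k$ to be the $k$ largest elements of $\{q : q \le T \text{ prime}\}$ if there are at least $k$ such primes — by the prime number theorem $\pi(T) = \Theta(T/\log T) = \Theta(\log k)$, which is {\em not} $\ge k$. So that is too weak; instead I would take the $d_i$'s to be $k$ distinct integers in $[2, B]$ each having {\em all} prime factors $\le T$ (smooth numbers): the count of $T$-smooth numbers up to $B$ with $B = T^{\Omega(\log\log k)}$ is, by de Bruijn's estimate $\Psi(B,T) = B \cdot u^{-u(1+o(1))}$ with $u = \log B/\log T$, easily $\ge k$ once the exponents are tuned, and every such number is automatically coprime to $p-1$ since all its prime factors $\le T$ divide $L'$ which is coprime to $p-1$. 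Enumerating the smallest $k$ such smooth numbers is a trivial $\poly(n)$ sieve over $[2,B]$. Finally set $d_i$ in decreasing order; condition (1) $\gcd(p-1,d_i)=1$ holds by smoothness, condition (2) $n < p \le n\cdot k^{C\log\log k}$ holds by the interval we searched, and the bound $d_1 < k^{C\log\log k}$ holds since $d_1 \le B$. The main obstacle, and the step I would spend the most care on, is the effective lower bound on the number of primes $p \in (n, nB]$ avoiding a prescribed set of small residue classes: one must be careful that $L'$ is small enough relative to $n$ for an unconditional PNT-in-AP estimate (Siegel–Walfisz suffices when $L' \le (\log n)^A$, which forces $T = O(\log\log n) = O(\log\log k)$, consistent with our choice) — and then to double-check that with $T = \Theta(\log\log k)$ the smooth-number count $\Psi(B,T) \ge k$ still forces only $B = k^{O(\log\log k)}$ and not worse. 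I expect the parameters $T$, $B$, $C$ to close up with $T \asymp \log\log k$, $\log B \asymp (\log\log k)(\log k)$ so that $B = k^{\Theta(\log\log k)}$, matching the claim.
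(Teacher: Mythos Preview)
Your approach is in the right spirit and close to the paper's, but the final parameter tuning does not close. With $T \asymp \log\log k$ you have only $\pi(T) = \Theta(\log\log k/\log\log\log k)$ primes to build smooth numbers from, and the count of $T$-smooth integers below $B = k^{C\log\log k}$ is then at most about $(\log B)^{\pi(T)} = \exp\!\big(O((\log\log k)^2/\log\log\log k)\big) = k^{o(1)}$, which is far fewer than the $k$ values $d_1,\dots,d_k$ you need. So the very last ``double-check'' you flagged in fact fails: you cannot simultaneously keep $T$ small enough for Siegel--Walfisz and large enough to produce $k$ smooth numbers inside the stated bound on $d_1$.

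The paper sidesteps this tension with a much simpler device. It takes $K$ to be the product of the first $\lceil \log_2(k+1)\rceil$ \emph{odd} primes (so $K < k^{C'\log\log k}$ by the prime number theorem), and takes $d_1>\dots>d_k$ to be $k$ distinct odd divisors of $K$, which exist since $K$ has at least $2^{\lceil\log_2(k+1)\rceil}-1\ge k$ such divisors and each is $<K$. For the prime, it searches the \emph{single} residue class $p\equiv 2\pmod K$ over the long interval $[n,2nK]$; then $p-1\equiv 1\pmod K$ is automatically coprime to every $d_i\mid K$. The point you missed is that because the search window has length $\Theta(nK)$ (a multiple of the modulus) rather than $\Theta(n)$, one does not need Siegel--Walfisz uniformity: the prime number theorem for arithmetic progressions (with the uniformity available once $2nK$ is a bounded power of $K$, guaranteed by the hypothesis $n>k^{C\log\log k}$ for $C$ large enough) already yields a prime in this class and range. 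In your language this corresponds to your \emph{initial} choice $T\asymp \log k\cdot\log\log k$, not the retreat to $T\asymp\log\log k$; with that $T$ the squarefree divisors alone already number $\ge k$, and the Siegel--Walfisz barrier you worried about is an artifact of having kept the interval length fixed at $nB$ independent of the modulus.
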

\begin{proof}
Let $K$ be the product of the first $\lceil \log_2( k +1 ) \rceil $ odd primes. By the prime number theorem, $K < k^{C'\log\log k}$ for some constant $C'>0$. Let $K > d_1 > \ldots > d_k$ be $k$ distinct odd divisors of $K$. We will show how to chose the prime $p$ as in the claim using results on the distribution of primes in arithmetic progressions (see \cite{Iwaniec} for more details). Property 1. will follow if our prime $p$ will satisfy the congruence $p = 2 \mod K$. Since $K$ and $2$ are co-prime, we know (see \cite{Iwaniec}) that the number of primes smaller than $x$ satisfying this congruence is asymptotically $\frac{1}{\phi(K)}\cdot \frac{x}{\ln x}$, where $\phi$ is Euler's totient function. From this it follows that there exists a prime in the range $[n,2nK]$ that satisfies the congruence $p=2 \mod K$ and, consequently, $p-1$ is co prime to all the divisors of $K$. Finding this $p$ in time polynomial in $n$ is trivial since we can just try all integers in the range and test them for primality.
\end{proof}

\section{Proof of Theorem~\ref{thm:list-decodable-codes}}\label{sec-listdec}
 In \cite{Guruswami11}, Guruswami considers an explicit family of codes (folded Reed-Solomon codes) that are defined as the image of an explicit mapping
$$
\mathcal{C}:\F^n \to (\F^r)^{m/r}
$$
where $\F=\F_q$ is any finite field of size $q$ and $1 \le m \le q-1$ is a multiple of $r$. Since our application is black-box (and applies to any code that shares the decoding properties listed below, such as the codes from \cite{GuruswamiWang11}) we omit the precise description of the code and refer the reader to \cite{Guruswami11} for more details on the actual definition of $C$.

In this setting, $n$ is the message length, $\Sigma=\F^r$ is the alphabet, $N=m/r$ is the block length and $R=n/m$ is the rate. Let $\eps>0$ be a sufficiently small constant and set $k \approx 1/\eps$ and $r \approx 1/\eps^2$. Guruswami shows (Theorem 7 in \cite{Guruswami11}) that for the above choice of parameters:
\begin{enumerate}
\item The mapping  $\mathcal{C}$ can be computed in polynomial time.
\item There exists a polynomial time algorithm that, given $y \in (\F^r)^{m/r}$, returns a basis to a subspace $H \subset \F^n$ of dimension $k$ which contains all points $x \in \F^n$ whose encoding $\mathcal{C}(x)$ has normalized hamming distance at most $1-R-\eps$ from $y$. (In fact, this algorithm runs in time quadratic in its output length.)
\end{enumerate}

We now describe how to  combine this code $C$ with our construction of subspace evasive sets (Theorem~\ref{thm-evasive}) to obtain a code $C'$ with shorter list size and without loosing too much in the decoding radius. Let $S \subset \F^n$ be a $(k,c=c(k,\eps))$ subspace-evasive set obtained from Theorem~\ref{thm-evasive}. Using Claim~\ref{cla-field}, we can construct a finite field $\F$ of size between $n$ and $O_\eps(n)$ so that the first $k$ degrees $d_1, d_2, \ldots, d_k$ used in the construction are co-prime to $|\F|-1$. From Claim~\ref{cla:sample} we know that there is an efficiently computable bijection $\varphi:\F^{(1-\eps)n} \to S \subset \F^n$. Consider the composed code $\mathcal{C'}:\F^{(1-\eps)n} \to (\F^r)^{m/r}$ defined as
$$
\mathcal{C'}(x) = \mathcal{C}(\varphi(x)).
$$
Let $R'$ denote the rate of $C'$. Then $$R' = (1-\eps)n/m = (1-\eps)\cdot R \geq R - \eps.$$
First, we claim that the composed code $\mathcal{C'}$ can be list-decoded with list size $c(1/\eps,\eps) = (1/\eps)^{O(1/\eps)}$ from a  fraction $$1 - R - \eps \geq 1 -R' - 2\eps$$ of errors. This is since for every $y \in (\F^r)^{m/r}$, the subspace $H \subset \F^n$ returned by the list-decoding algorithm for $\mathcal{C}$ contains at most $c$ messages who lie in $S$.

In order to maintain the efficiency of encoding and list-decoding of $\mathcal{C'}$, we need to guarantee three properties:
\begin{enumerate}
\item[(i)] Encoding: the map $\varphi$ should be computable in polynomial time.
\item[(ii)] Decoding: the inverse map $\varphi^{-1}$ should be computable in polynomial time.
\item[(iii)] List-decoding: for every subspace $H \subset \F^n$ of dimension $s$, we can find in polynomial time the intersection $S \cap H \subset \F^n$.
\end{enumerate}
The first two items are guaranteed by Claim~\ref{cla:sample}, and the third by Claim~\ref{cla:compute-intersect}. Using the property that the decoded given by Guruswami runs in quadratic time we get that the composed code $C'$ can be list-decoded (with the above parameters) in quadratic time (for all constant $\eps >0$). This completes the proof of Theorem~\ref{thm-evasive}.

\section{Acknowledgments} 
\label{sec:acknowledgments}
The authors would like to thank Venkatesan Guruswami, Avi Wigderson and Amir Yehudayoff for stimulating conversations. Part of the proof of Theorem~\ref{thm-explicit-efv} is inspired by an unpublished construction by Jean Bourgain of affine extractors and we thank him for sharing his result with us.

\bibliographystyle{alpha}
\bibliography{evasive}

\appendix

\section{Fourier analysis} 
\label{sec:preliminaries}


Let $\F$ be a finite field. An additive character (e.g. Fourier basis) of $\F$ is a function $\chi:\F \to \C$ such that $\chi(x+y)=\chi(x)\chi(y)$ for all $x,y \in \F$. The set of characters form an orthogonal basis of functions from $\F$ to $\C$.
Let $\ep(x):=e^{2 \pi i x / p}$. The set of characters of $\F=\F_{p^{\ell}}$ for $p$ prime and $\ell \ge 1$
is given by $\chi_a(x) = \ep(\Tr(a x))$ for $a \in \F_{p^{\ell}}$, where the trace $\Tr:\F_{p^{\ell}} \to \F_p$ is defined as $\Tr(x)=\sum_{i=0}^{e-1} x^{p^i}$. The constant function $1$ is a trivial character for any field; any other character is called non-trivial.
More generally, the characters of the vector space $\F^m$ are given by $\chi_a(x)=\ep(\Tr(\ip{a}{x}))$ where $a=(a_1,\ldots,a_m) \in \F^m$, $x=(x_1,\ldots,x_m) \in \F^m$ and $\ip{a}{x}=a_1 x_1 + \ldots + a_m x_m$.

Let $X$ be a random variable taking values in $\F^m$. Its Fourier coefficients $\widehat{X}(a)$ for $a \in \F^m$ are given by
$$
\widehat{X}(a) := \sum_{x \in \F^m} \Pr[X=x] \chi_a(x);
$$
and for any $x \in \F^m$, the Fourier inversion formula gives that
$$
\Pr[X=x] = |\F|^{-m} \sum_{a \in \F^m} \widehat{X}(a) \overline{\chi_a(x)}.
$$

\paragraph{Character sums}
The following result by Weil \cite{Weil48} (see also \cite{Schmidt}) is a strong tool which gives a bound on the average of a nontrivial character evaluated over the output of a low degree polynomial.
\begin{thm}[Weil]\label{thm-weil}
Let $f(x)$ be a non-constant degree $d$ polynomial over a finite field $\F$. Let $\chi:\F \to \C$ be a nontrivial additive character. Then
$
\left|\E_{x \in \F}[\chi(x)]\right| \le \frac{d-1}{\sqrt{|\F|}}.
$
\end{thm}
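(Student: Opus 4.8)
I interpret the statement as the Weil bound $\left|\sum_{x\in\F}\chi(f(x))\right|\le(d-1)\sqrt{|\F|}$ (the displayed inequality, with $\chi(x)$ read as $\chi(f(x))$), and I would prove it the standard way, via the Riemann Hypothesis for curves over finite fields. Write $\F=\F_q$, $q=p^\ell$. First I would normalize: a nontrivial additive character of $\F$ is $x\mapsto\ep(\Tr_{\F/\F_p}(ax))$ for some $a\in\F^*$, and absorbing $a$ into $f$ leaves $\deg f=d$ unchanged, so it suffices to bound $\sum_{x\in\F}\ep(\Tr(f(x)))$. I would also assume $f$ is not of Artin--Schreier type $g^p-g+c$ (otherwise $\Tr(f(x))$ is essentially constant and the bound can fail); this is automatic when $\gcd(d,p)=1$, and in the remaining cases one first extracts $p$-th powers using $\Tr(y^p)=\Tr(y)$. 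So take $p\nmid d$.

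Next I would pass to the Artin--Schreier curve $C$, the smooth projective model of $\{y^p-y=f(x)\}$ over $\F$. Since $y\mapsto y^p-y$ on $\F_{q^m}$ has image $\ker(\Tr_{\F_{q^m}/\F_p})$ with all fibres of size $p$, counting the points of $C$ over the $x$-line gives, for every $m\ge1$,
\[
\#C(\F_{q^m})=q^m+1+\sum_{t\in\F_p^*}S_m^{(t)},\qquad S_m^{(t)}:=\sum_{x\in\F_{q^m}}\ep\big(\Tr_{\F_{q^m}/\F_p}(tf(x))\big),
\]
the $+1$ being the unique (totally ramified, since $p\nmid d$) point at infinity. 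To single out $S_1=S_1^{(1)}$ rather than its average over $t$, I would attach to $(\ep\circ\Tr,f)$ the $L$-function $L(T)=\exp\big(\sum_{m\ge1}S_m^{(1)}T^m/m\big)$.

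The argument then rests on three classical facts. (i) $L(T)=\prod_i(1-\omega_i T)$ is a polynomial (Dwork rationality, or Grothendieck's cohomological formalism). (ii) $\deg L=d-1$: Riemann--Hurwitz for the degree-$p$ cover $C\to\mathbb{P}^1$, which is ramified only over $\infty$ with exponent of the different $(p-1)(d+1)$, gives $g_C=(p-1)(d-1)/2$, and $H^1(C)$ decomposes under the $\mathbb{Z}/p$-action $y\mapsto y+1$ into its $p-1$ nontrivial eigenspaces, each of dimension $2g_C/(p-1)=d-1$ --- equivalently, this is the Grothendieck--Ogg--Shafarevich Euler characteristic formula. (iii) Weil's Riemann Hypothesis for $C$: $|\omega_i|=\sqrt q$ for every $i$. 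Combining these with $S_1=-\sum_i\omega_i$ gives $|S_1|\le(d-1)\sqrt q$, as claimed.

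The only real obstacle is (iii) --- this is precisely \emph{Weil's theorem}, and it is where all the content lives. The two honest ways to supply it are Weil's original proof (the theory of correspondences on $C$, Riemann--Roch, and positivity of the intersection pairing on the surface $C\times C$), or the Bombieri--Stepanov elementary method, which the paper's reference \cite{Schmidt} carries out: one bounds $\#\{x\in\F_q:\Tr(f(x))=c\}$ directly by constructing an auxiliary polynomial that vanishes to order $\approx\sqrt q$ at each such $x$ while having degree only $\approx(d-1)\sqrt q+O(q)$ --- the delicate points being the design of that polynomial and checking it is not identically zero --- and then recombines over $c\in\F_p$ with character weights. Next to (iii), steps (i)--(ii) and the normalization are routine bookkeeping.
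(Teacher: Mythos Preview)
The paper does not prove this theorem at all. It is stated in the appendix as a classical tool, attributed to Weil \cite{Weil48} with a pointer to Schmidt \cite{Schmidt} for an exposition, and is then used as a black box inside Claim~\ref{claim-lower-bound-size-vf}. So there is no ``paper's own proof'' to compare against.

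Your sketch is a faithful outline of the standard approaches. You correctly caught the typo (the bound is for $\chi(f(x))$, not $\chi(x)$), correctly flagged the degenerate Artin--Schreier case $f=g^p-g+c$ (where the sum has modulus $q$ and the inequality as written fails), and the reduction to $p\nmid d$ via $\Tr(y^p)=\Tr(y)$ is the right move. The Artin--Schreier curve, the $L$-function factorization, the genus computation $g_C=(p-1)(d-1)/2$, and the appeal to the Riemann Hypothesis for curves are all correct, as is the alternative Stepanov--Bombieri route you mention (which is indeed what \cite{Schmidt} does). None of this is expected here, though: in this paper Weil's bound is input, not output.
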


\end{document}